\tikzstyle{bag} = [align=center]
\theoremstyle{definition}
\newtheorem{definition}{Definition}[section]
\newtheorem{example}{Example}[section]
\theoremstyle{plain}
\newtheorem{theorem}{Theorem}[section]
\newtheorem{proposition}[theorem]{Proposition}
\newcommand{\tr}{^{\prime}}
\newcommand{\oset}[2]{%
  {\mathop{#2}\limits^{\vbox to -.1\ex@{\kern-\tw@\ex@
   \hbox{\scriptsize #1}\vss}}}}
\def\keywords#1{{\vskip4pt
\noindent
\hbox to50.5pt{KEYWORDS:\quad\hss}\vtop{\advance \hsize by -59.5pt
\leftskip=28pt \rightskip=0pt
\noindent\ignorespaces#1\vskip8pt}}}
\newcommand*\xbar[1]{%
  \hbox{%
    \vbox{%
      \hrule height 0.5pt 
      \kern0.5ex
      \hbox{%
        \kern-0.25em
        \ensuremath{#1}%
        \kern-0.1em
      }%
    }%
  }%
}
\let\runauthor\@author
\let\runtitle 
\newtheorem*{rep@theorem}{\rep@title}
\newcommand{\newreptheorem}[2]{%
\newenvironment{rep#1}[1]{%
 \def\rep@title{#2 \ref{##1}}%
 \begin{rep@theorem}}%
 {\end{rep@theorem}}}
\newtheorem*{rep@example}{\rep@title}
\newcommand{\newrepexample}[2]{%
\newenvironment{rep#1}[1]{%
\def\rep@title{#2 \ref{##1}}%
 \begin{rep@example}}%
 {\end{rep@example}}}
\date{}
\begin{document}

\title{On the analysis of sequential designs without a specified number of observations}

\author{Anna Klimova \\
{\small{National Center for Tumor Diseases (NCT), Partner Site Dresden, and}}\\
{\small{Institute for  Medical Informatics and Biometry,}}\\ 
{\small{Technical University, Dresden, Germany} }\\
{\small \texttt{anna.klimova@nct-dresden.de}}\\
{}\\
\and 
Tam\'{a}s Rudas \\
{\small{Department of Statistics, E\"{o}tv\"{o}s Lor\'{a}nd University, Budapest, Hungary}}\\
{\small \texttt{trudas@elte.hu}}\\
}

\maketitle

\begin{abstract}
The paper focuses on sequential experiments for categorical responses in which whether or not a further observation is made depends on the outcome of a previous experiment.  Examples include subsequent medical interventions being performed or not depending on the result of a previous intervention, data about offsprings, life tables, and repeated educational retraininig until a certain proficiency level is achieved.  Such experiments do not lead to data with a full Cartesian product structure and, despite a prespecified initial sample size, the total number of observations, or interventions, made cannot be determined in advance.  The paper investigates the distributional assumptions behind such data and describes a parameterization of the distribution that arises and the respective model class to analyze it.  Both the data structure resulting from such an experiment and the model class are special examples of staged trees in algebraic statistics.  The properties of the resulting parameter estimates and test statistics are obtained and illustrated using hypothetical and real data.
\end{abstract}


\section{Introduction}\label{sectionIntro}

Experimental designs aiming at investigating the distribution of an outcome of interest over subsequent interventions and across stratification factors occur  in many practical settings.
This paper works with the sequential designs that can be expressed using a tree, with a primary focus on  path dependent designs when sequential experiments are initiated conditionally on a previous response.  Such designs may occur in panel studies, life-event analyses,  educational training, and medical studies where the subsequent treatment may depend on the response to previously administered treatments.  For example,   in educational programs, one can use a certain training regimen until a trainee reaches proficiency (test-retest problems).   In oncology, a first-line treatment is usually the first treatment choice for a newly diagnosed patient. A second-line treatment would be administered to all patients for whom the first treatment did not bring a desired effect. A third-line treatment would be applied if neither of the first two worked, and so on.  Different examples and relevant statistical hypotheses can be found in  \cite{JohnsonMay1995},  \cite{Agresti2002}, \cite{EdwardsAPFA}, among others.   This paper proposes a class of multiplicative models whose parametric structure is implied by the data-generating process and can be  applied to the the data collected under a path-dependent design. These models, on one hand, generalize the log-linear models \citep[cf.][]{Agresti2002}, and relational models \citep*{KRD11}, and on the other hand,  are special examples of the staged trees studied in the algebraic statistics \citep[cf.][]{SmithAndersonTrees2008, GeorgenSmithTrees2018, SturmfelsDuarteMLE2021}.   The paper discusses the correct distributional assumptions,  obtains the maximum likelihood estimators of the model parameters, and discusses their properties.  Because in path-dependent data the total number of observations,  referred to as the \textit{exposure size} is not known in advance,   the sufficient statistics of the  maximum likelihood estimates and in their covariance are  shown to include an exposure-related factor which turns to the unity for a fixed, not-path dependent, design.    The properties of the data that are preserved by the maximum likelihood estimates and independent of exposure size are also investigated.

The remainder of the paper is organized as follows.  Section \ref{sectionOneTrt} introduces a tree-based framework for sequential  experiment designs and proposes a multiplicative parameterization for testing the hypothesis of complete homogeneity of experimental outcomes.  Such parameterizations entail a model class that can be represented using a staged tree with a single floret.  The maximum likelihood estimation in this model class,  the properties of the maximum likelihood estimators, including their asymptotic distribution,  are addressed in detail. Section \ref{SectionMultiple}  extends the discussion to the hypothesis of multi-class homogeneity,  when the treatments involved in the design are partitioned in several classes,  where homogeneity assumptions hold within each class individually.  The corresponding parameterizations and the models entailed by them can be expressed using a multi-floret staged tree. The properties of the maximum likelihood estimators in the multi-floret case are derived and shown to generalize the one-floret case.    Section \ref{SectionExample2} illustrates the results using real data.  Detailed proofs are presented in the Appendix.

\section{Multiplicative models for one-treatment sequential designs}\label{sectionOneTrt}

The focus of this paper is on investigating data that arise from a series of experiments.  Assume that each experiment is characterized by a random variable with a finite set of outcomes and their respective probabilities and the experimental design has a tree structure, that is, starting from the first experiment (called root),  each subsequent experiment is performed conditionally on the outcome of the previous experiment, including the possibility of no further experiment being performed.  The tree structure implies that every node can be uniquely associated with the path leading to it.  A tree  in which all root-to-leaf paths are of the same length will be called complete, and a tree that  allows for the paths emanating from the same node and leading to a leaf be of different length will be called as path-dependent.  An outcome indicating that no further experiments follow is referred to as terminal, and an experiment all of whose outcomes are terminal is also called terminal.  Because of the tree structure,  the terminal outcomes do not have any emanating edges and correspond to the tree leaves. The data are therefore the frequencies observed at each leaf.   Note that even if the actual interventions that are applied at two distinct  nodes of the design are the same and have  the identical set of outcomes, the probability distributions associated with them may be different. Therefore, they will be considered as different experiments and thus different random variables.

The probabilities of outcomes of a single experiment will  be called the edge probabilities from now on. A subset of experiments with identical outcomes and outcome probabilities, or, equivalently, a subset of tree nodes with identical edge probabilities will be called a floret.  The terminal nodes are uniquely identified with the tree leaves and the path probabilities leading to terminal nodes will be called leaf probabilities. Given a tree node, the probability of the path leading to it equals the product of the edge probabilities along this path. The path probability of the root is assumed to be $1$. To illustrate, Figure \ref{fullTree} shows a tree design with three experiments $X_1$, $X_2$, $X_3$, and four leaves, marked by $p_1$, \dots, $p_4$. For instance, the node $X_2$  has two emanating edges with the edge probabilities $\theta_{21}$ and  $\theta_{22}$.   The path leading to the leaf whose probability is  $p_1$ consists of two edges parameterized by $\theta_{11}$ and $\theta_{21}$, respectively, so  the leaf probability is equal to $p_1 = \theta_{11} \cdot \theta_{21}$. 
Figure \ref{fullTreereduced}  shows a tree design with two experiments, $X_1$ and $X_2$,  and three leaves.  Because the path leading to the leaf $p_3$ is shorter than the paths leading to the other two leaves the design is path dependent.

\begin{figure}
\begin{center}
\includegraphics[scale=0.9]{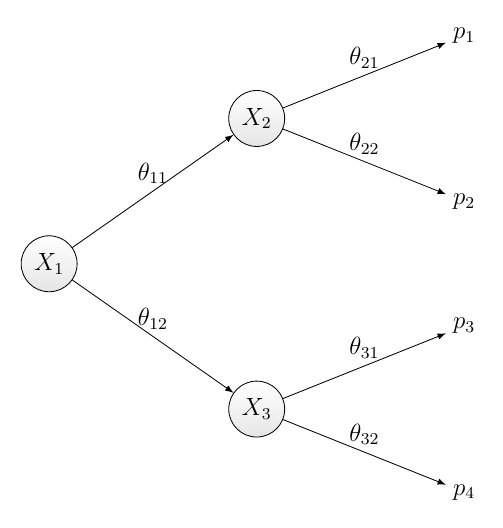}
\end{center}
\caption{A three-experiment design of a complete tree structure.} 
\label{fullTree}
\end{figure}

\begin{figure}
\begin{center}
\includegraphics[scale=0.9]{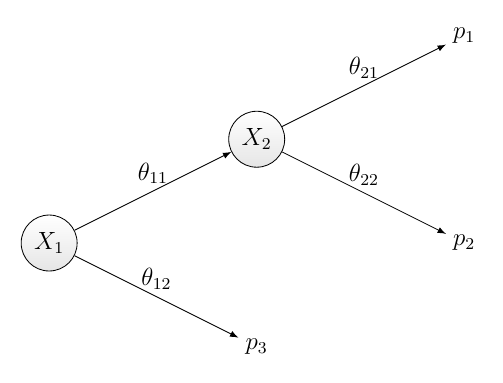}
\end{center}
\caption{A two-experiment path-dependent design.} 
\label{fullTreereduced}
\end{figure}

 Denote the set of (non-terminal) nodes by ${X}_1, \dots, {X}_K$, with ${X}_1$ being the root. Assume that the probability distributions of ${X}_1, \dots, {X}_K$ have ranges $\mathcal{I}_1$,  \dots , $\mathcal{I}_K$ and are  parameterized by strictly positive edge probabilities $\boldsymbol \theta^{(1)} =  (\theta_{1l})_{l = 1}^{I_1}$, $\dots$, $\boldsymbol \theta^{(K)} = (\theta_{Kl})_{l = 1}^{I_K}$, respectively. Denote by $\mathcal{I}$ the set of leaf nodes and treat it as an ordered sequence, without loss of generality,  $\mathcal{I} = (1, \dots, I)$, where $I = |\mathcal{I}|$. 

This section studies the most restrictive hypothesis, stating that all $X_1$, \dots, $X_K$ are identically distributed, that is,  $\mathcal{I}_1  \equiv \cdots \equiv \mathcal{I}_K$, and
\begin{equation}\label{hypothesisOneClass}
\mathcal{H}:  \boldsymbol \theta^{(1)} =  \dots = \boldsymbol \theta^{(K)}. 
\end{equation}
In this case, the entire sequence of experiments forms a single floret.  Denote the common floret parameter by $\boldsymbol \theta = (\theta_1, \dots, \theta_J)\tr$. From the context, $ \boldsymbol \theta \in \Delta_J$, where $ \Delta_J$ is the open simplex in $J$ dimension.  For simplicity of presentation,  the set of positive probability distributions on leaves is identified with the open simplex $\Delta_I$ in $I$ dimensions.  Let $\mathbf{M} = (\mu_{ji})$ be a $J \times I$  matrix, such that for each $i = 1, \dots, I$, $j = 1, \dots, J$, the entry  $\mu_{ji}$ equals the number of times an edge parameter $\theta_j$ appears on the tree path from the root $X_1$ to the leaf $i$.  Given a tree structure $\mathcal{T}$  with leaf nodes $\mathcal{I}$, the hypothesis (\ref{hypothesisOneClass}) entails the set of constraints on the leaf parameters $\boldsymbol p$ that can be expressed as a multiplicative model with the design matrix $\mathbf{M}$, namely:

\begin{definition} The one-floret tree model $\mathcal{M}$ generated by (\ref{hypothesisOneClass}) on a tree $\mathcal{T}$ with the nodes $X_1$, \dots, $X_K$ is the set of probability distributions on $\mathcal{I}$ that satisfy:
\begin{equation} \label{DefOneFloret}
\mathcal{M} = \left \{ \boldsymbol p \in \Delta_I: \,\boldsymbol p = \boldsymbol \theta^{\mathbf{M}\tr},\, \mbox{for some } \, \boldsymbol \theta = (\theta_1, \dots, \theta_J)\tr \in \Delta_J \right\}.
\end{equation} 
\end{definition}
\noindent The exponentiation is component-wise, that is,  \, $ \boldsymbol p = (p_i)_{i = 1}^I = ( \prod_{j = 1}^J \theta_j^{\mu_{ji}})_{i = 1}^I$.
After rewriting the definition in terms of the logarithms of $\log \theta$'s, one sees that a one-floret tree model (\ref{DefOneFloret}) is an exponential family of distributions \citep[cf.][]{BrownBook}.

\noindent Two examples of data structures leading to a one-floret tree model are described next.

\begin{example} \label{ExHWEq.1}
Consider a sequential experimental design with two treatments $T_1$ and $T_2$, each with a dichotomous outcome success/failure. Suppose $T_1$ is applied first, and $T_2$ is applied second, irrespective of whether $T_1$ was successful or not. The data generating process can be visualized using the tree in Figure \ref{TreeHW1} with three nodes $X_1$, $X_2$, $X_3$, where $X_1$ expresses the outcomes of treatment $T_1$, $X_2$ is the outcome of treatment $T_2$ given $T_1$ was success and $X_3$ is the outcome of $T_2$ given that $T_1$ failed.  Here $p_1$, $p_2$, $p_3$, $p_4$ denote the probabilities of the corresponding experimental paths, that is, the leaf probabilities.
$$
\begin{array}{c|cc}
&\multicolumn{2}{c}{T_2}\\
\cline{2-3}
T_1& \mbox{ success }& \mbox{ failure }\\
\hline
\mbox{ success }	& p_{1} & p_{2}\\
\mbox{ failure } 	& p_{3} &  p_{4}\\ 
\end{array}
$$

By assigning a certain pattern of parameters to the tree edges, one can specify  different hypotheses about the data.  The least restrictive hypothesis states that all of the variables $X_1$, $X_2$, and $X_3$ have different probabilities of success and failure, as  in the tree in Figure \ref{fullTree}:
\begin{equation}\label{satur}
p_1 = \theta_{11} \theta_{21},  \quad p_2 = \theta_{11} \theta_{22},  \quad p_3 = \theta_{12} \theta_{31},  \quad p_4 = \theta_{12} \theta_{32}.
\end{equation}  
Allowing the three variables to have arbitrary distributions implies no restriction  on the joint distribution of $T_1$ and $T_2$.

Another possible hypothesis  assumes that the variables $X_2$ and $X_3$ have the same distribution but it is different from the one of $X_1$.  
Let $\theta_1$, $\theta_2 = 1- \theta_1$ denote the probabilities of success and failure for $X_1$, and $\theta_3$, $\theta_4 = 1 - \theta_3$ those for $X_2$ and $X_3$.  See Figure \ref{TreeHW1}. In this case, the leaf probabilities are equal to:
\begin{equation}\label{indep}
p_1 = \theta_1 \theta_3,  \quad p_2 = \theta_1 \theta_4,  \quad p_3 = \theta_2 \theta_3,  \quad p_4 = \theta_2\theta_4, 
\end{equation}  
which, in fact, is equivalent to the model of independence between $T_1$ and  $T_2$. 

One can pose a further restriction that all $X_1$, $X_2$, and $X_3$ are identically distributed, with the common probabilities of success $\theta_1$ and failure $\theta_2 = 1 - \theta_1$. The corresponding tree is shown in Figure \ref{TreeHW2}. The leaf  probabilities satisfy: 
\begin{equation}\label{modelHWeq}
p_{1}= \theta_1^2, \,\,p_{2}=\theta_1\theta_2, \,\,p_{3}= \theta_2\theta_1, \,\,p_{4}= \theta_2^2. 
\end{equation} 
The model  (\ref{modelHWeq}) expresses the composite hypothesis for the contingency table above,  saying that the random variables $T_1$ and $T_2$ are independent and identically distributed (marginal homogeneity).  This hypothesis is equivalent to a tree model with one-floret $\{X_1, X_2, X_3\}$ and  the design matrix 
\begin{equation}\label{HWEmatrix}
\mathbf{M} = \left(\begin{array}{rrrr}   2 & 1 & 1 & 0 \\ 
                                                            0 & 1 & 1 & 2 \\
                                                             \end{array} \right). 
\end{equation} 
The matrix $\mathbf{M}$ has as many columns as the number of leaves and each row corresponds to an edge parameter.  A column of $\mathbf{M}$  indicates how many times each edge parameter appears on the path leading from the root to the corresponding leaf.
\qed
\end{example}

\begin{figure}[h]
\begin{minipage}[b]{0.4\columnwidth}
\begin{center}
\includegraphics[scale=0.9]{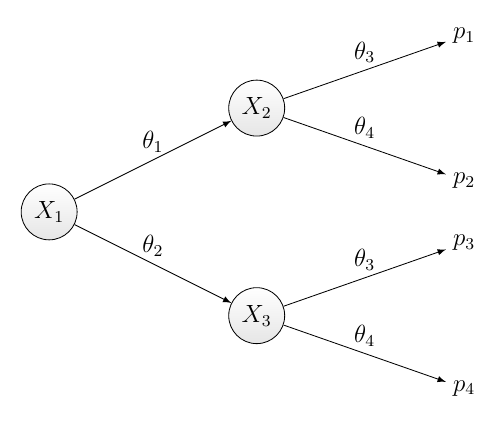}
\end{center}
\caption{A tree representation for the hypothesis (\ref{indep}) in Example \ref{ExHWEq.1}}
\label{TreeHW1}
\end{minipage}%
\hspace{2cm}
\begin{minipage}[b]{0.35\columnwidth}
\begin{center}
\includegraphics[scale=0.9]{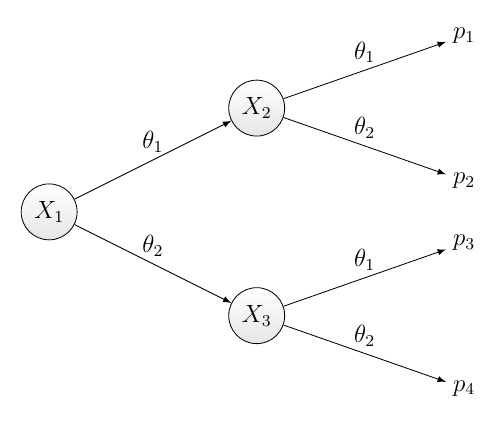}
\end{center}
\caption{A tree representation for the hypothesis (\ref{modelHWeq})  in Example \ref{ExHWEq.1}}
\label{TreeHW2}
\end{minipage}%
\end{figure}

\begin{figure}
\begin{center}
\includegraphics[scale=0.9]{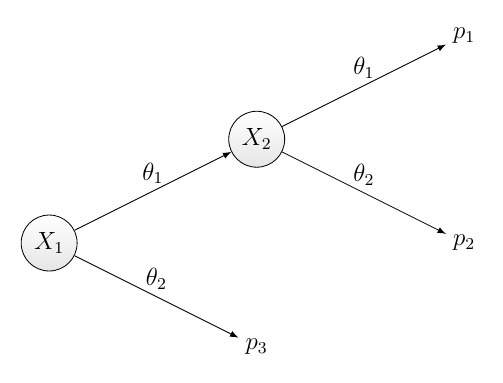}
\end{center}
\caption{The tree design and hypothesis (\ref{calP}) in Example \ref{ExCalvesDef}.}
\label{TreeCalves}
\end{figure}

The next example illustrates a path-dependent data generation process.

\begin{example}
\label{ExCalvesDef}
\cite{Agresti2002} described a two-step study which investigated the existence of immunizing effect of a pneumonia infection on dairy calves.  First, all the calves were exposed to a pneumonia infection, and, second, the calves who got infected the first time, were exposed again.  After that the re-exposed calves were examined  whether or not they contracted the second infection.  The resulting data structure can be described using the following incomplete contingency table where the $(\mbox{No},  \mbox{Yes})$ cell is empty, because  those who did not develop the first infection, were not exposed to the second one.
$$
\begin{array}{c|cc}
&\multicolumn{2}{c}{\mbox{Secondary Infection}}\\
\cline{2-3}
\mbox{Primary Infection}& \mbox{Yes}& \mbox{No} \\
\hline
\mbox{ Yes }	& p_1 & p_2\\  
\mbox{ No } 	&   -   &  p_3\\ 
\end{array}
$$
\noindent Here $p_{1}$, $p_{2}$, $p_{3}$ are the probabilities of the corresponding {exposure/outcome} paths, so $p_{1} + p_{2} + p_{3} =1$. 
Notice that the table summary carries no information about how the data were generated.  To represent both the generation process and data structure itself  the  tree in Figure \ref{TreeCalves} can be used. To express the hypothesis of no immunizing effect,  one assigns  the same edge  parameter $\theta_1 \in (0,1)$ as the probability of getting infected during the first and the second exposures, which results in the following leaf probabilities:
\begin{equation}\label{calP}
p_{1}=\theta_1^2, \,\,p_{2}=\theta_1\theta_2, \,\,p_{3}=\theta_2, \quad \mbox{where } \theta_1 + \theta_2 = 1,
\end{equation}
or in the matrix form,  $\boldsymbol p = \boldsymbol \theta^{\mathbf{M}\tr}$, where $\boldsymbol \theta = (\theta_1, \theta_2)\tr$ and the design matrix is 
\begin{equation} \label{calvesM}
\mathbf{M} = \left(\begin{array}{ccc} 2 & 1 & 0  \\ 
                                                       0 & 1 & 1  \\
                                                        \end{array} \right). \end{equation}  
Because all nodes are identically and parameterized by the same set of edge parameters, the equations (\ref{calP}) is a one-floret model.

The hypothesis of no immunizing effect can also be specified by placing a restriction on the leaf probabilities, namely,   $(p_1 + p_2)/ p_3 = p_1/p_2$, saying that odds of success versus failure in the first and the second exposure are the same. This constraint also expresses a homogeneity assumption but it has implications very different from that in Example \ref{ExHWEq.1}. Finally, this is an example of the very general odds ratios equalities that define the model and could be used to count degrees of freedom.  It illustrates the duality of freedom and constraint specification of a model as well.

\qed
\end{example}

Note that  different experimental designs could also lead to the same data structure. If calves were not (actively) exposed to infection rather, were only observed whether they have acquired no or one or two infections, e.g., calves were observed for infections one time and a second time, say, three weeks later, in an environment where bacteria, viruses or fungi causing pneumonia were present, then the data structure would be the same. However, in this case, the empty $(\mbox{No},  \mbox{Yes})$ cell would be implied by such a combination not being logically possible: a second infection without the first one is not possible.

The relevance of distinguishing between combinations not observed by design  and logically impossible combinations is that in the first case, as the result of model fitting, one may obtain a positive estimate for the probability of the unobserved cell but in the second case, such an estimate would not make sense. For example, \cite{Kawamura1995} in an experiment observed the number of swimming crabs entering traps with some, but not all combination of baits of interest. In this case, it is not logically impossible that some crabs would enter the traps with the bait combination which was not investigated.  For details see \cite{KRD11}.

The models (\ref{modelHWeq}) and (\ref{calP}) were also described  as generalizations of relational models by  \cite{KRD11}, who also emphasized an intrinsic difference between the two models, namely, the presence of the overall effect (OE) in the exponential family corresponding to the former model  and its absence in the exponential family corresponding to the latter. The presence of the OE is equivalent to the existence of a model parameterization in which all $p_i$'s can have a common parameter,  also serving as normalization constant.  While the presence of the OE is rather a conventional assumption,  the absence of the OE  is usually characteristic to the data generation process.  The models without an OE become relevant when the sample space is an incomplete Cartesian product and occur naturally for incomplete sequential designs \citep[cf.][]{KRoveff}.  The next proposition states a sufficient condition for a one-floret model to have the OE.

\begin{proposition} A one-floret tree model entailed by a complete tree  has the overall effect.  
\end{proposition}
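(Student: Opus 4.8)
The plan is to reduce the claim to the row-space characterization of the overall effect for multiplicative models. As recalled above, $\mathcal{M}$ has the overall effect precisely when its defining matrix admits a common, scale-carrying parameter in every leaf probability; after passing to logarithms in the model definition, $\log\boldsymbol p = \mathbf{M}\tr\log\boldsymbol\theta$ ranges over the span of the rows of $\mathbf{M}$, and a common parameter serving also as normalizing constant can be split off from every $p_i$ exactly when the all-ones vector $\mathbf{1}\in\mathbb{R}^I$ lies in that span, i.e.\ when $\boldsymbol c\tr\mathbf{M}=\mathbf{1}\tr$ for some $\boldsymbol c\in\mathbb{R}^J$. So it suffices to produce such a $\boldsymbol c$.

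The key point is that in a complete tree all root-to-leaf paths have the same number of edges, say $d$. Fix a leaf $i$: each of the $d$ edges on the path from $X_1$ to $i$ carries exactly one of $\theta_1,\dots,\theta_J$, and $\mu_{ji}$ counts how many of them equal $\theta_j$; hence $\sum_{j=1}^J\mu_{ji}=d$. Since this holds for every leaf, every column of $\mathbf{M}$ sums to $d$, that is $\mathbf{1}\tr\mathbf{M}=d\,\mathbf{1}\tr$, with the left-hand $\mathbf{1}$ of length $J$ and the right-hand one of length $I$. Taking $\boldsymbol c=\tfrac1d\mathbf{1}$ then exhibits $\mathbf{1}$ as a linear combination of the rows of $\mathbf{M}$, and by the criterion above $\mathcal{M}$ has the overall effect.

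To display the scale-carrying parameter explicitly, one can reparameterize $\theta_j=\tau_j/\sum_l\tau_l$ with $\boldsymbol\tau\in\mathbb{R}_{>0}^J$; then $\sum_j\mu_{ji}=d$ gives $p_i=\rho\prod_{j=1}^J\tau_j^{\mu_{ji}}$ with $\rho=\big(\sum_l\tau_l\big)^{-d}$ common to all leaves. Moreover, in the one-floret setting completeness forces every non-terminal node to branch into all $J$ outcomes and every leaf to sit at depth $d$, so $\mathcal{T}$ is the full $J$-ary tree of depth $d$; consequently $\sum_i\prod_j\tau_j^{\mu_{ji}}=\big(\sum_l\tau_l\big)^{d}$ and $\rho$ is literally the normalizing constant, which is the overall effect.

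I do not expect a real obstacle here: granted the standard equivalence ``overall effect $\Leftrightarrow$ $\mathbf{1}$ in the row space of the model matrix'', the proof is just the column-sum computation. The only things needing care are phrasing that equivalence in the exact form used in the cited relational-models literature, and, for the explicit version, verifying the identity $\sum_i\prod_j\tau_j^{\mu_{ji}}=(\sum_l\tau_l)^d$, which rests on the fact that a complete one-floret tree is the full $J$-ary tree of depth $d$.
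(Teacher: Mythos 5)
Your proof is correct and follows essentially the same route as the paper: both observe that completeness forces every column of $\mathbf{M}$ to sum to the common path length $d$, so $\mathbf{1}\tr\mathbf{M}=d\,\mathbf{1}\tr$ places the all-ones row in the row space of $\mathbf{M}$, which is the criterion for the overall effect. Your explicit reparameterization $\theta_j=\tau_j/\sum_l\tau_l$ and the identity $\sum_i\prod_j\tau_j^{\mu_{ji}}=(\sum_l\tau_l)^d$ are a correct but optional elaboration of the paper's one-line remark that the model ``can be reparameterized to have a common parameter in all $p_i$'s.''
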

\begin{proof}

\noindent Because in a complete tree  all root-to-leaf paths are of the same length,   the sums across the columns of the design matrix $\mathbf{M}$  are also the same, which implies that the row of all 1's,  $\boldsymbol 1 \tr = (1, \dots, 1)$,  is in the row space of $\mathbf{M}$. Therefore,  the model can be reparameterized to have a common parameter in all $p_i$'s, so $\mathcal{M}$ is a model with  the OE.  
\end{proof}

For instance,  the tree in Figure \ref{TreeHW2} is complete because all root-to-leaf paths have the same length, respectively, the column sums of the design matrix (\ref{HWEmatrix}) are the same,  and thus  (\ref{modelHWeq}) is a model with the OE.  The tree in Figure \ref{TreeCalves} is incomplete and it can be checked that the row space of the design matrix (\ref{calvesM}) does not contain the row of 1's $(1,1,1)$, thus the corresponding model (\ref{calP})  does not have the OE.  In the latter case,  including an additional parameter, as the OE, to the model would lead to a saturated model without the desired tree structure   \citep{KRD11}.

The maximum-likelihood estimation under a one-floret tree model is discussed next. Let $\boldsymbol y$  be the observed data obtained from a sample of size $N$ using a generating procedure of a tree structure with nodes ${X}_1, \dots, {X}_K$. As follows from  Theorem   4.6 of \cite{TRbook2018}, $\boldsymbol y$ is a realization of a random variable $\boldsymbol Y$ which has a multinomial distribution $Mult(N, \boldsymbol p)$ with parameters $N$ and $\boldsymbol p$, where $\boldsymbol p$ is the vector of leaf probabilities. 
The kernel of the multinomial log-likelihood function under a one-floret model $\mathcal{M}$ is equal to:
\begin{equation}\label{LLoneF}
{L}(\boldsymbol p(\boldsymbol{\theta}), \boldsymbol y) =  {M}_1\boldsymbol y \cdot  \mbox{log }  \theta_1 + \dots {M}_J\boldsymbol y \cdot  \mbox{log }  \theta_J  = (\mathbf{M}\boldsymbol y)\tr \cdot  \mbox{log } \boldsymbol \theta,
\end{equation} 
where  $M_1,  \dots,  M_J$ denote the rows of the design matrix $\mathbf{M}$ and  $\mbox{log } \boldsymbol \theta = (\theta_1 , \dots, \theta_J)\tr$. The MLE of $\boldsymbol{\theta}$, if exists, is the unique maximum of the Lagrange function:
\begin{equation}\label{LLoneF1}
\mathcal{{L}}(\boldsymbol p(\boldsymbol{\theta}), \boldsymbol \alpha, \boldsymbol y) = (\mathbf{M}\boldsymbol y)\tr \mbox{log } \boldsymbol \theta  -  \alpha(\boldsymbol 1' \boldsymbol \theta - 1),
\end{equation} 
where $ \alpha$ stands for the Lagrange multiplier. The theory of exponential families imply that  ${M}_1\boldsymbol y,$  \dots,  ${M}_J\boldsymbol y$ are the sufficient statistics of the MLE under the model $\mathcal{M}$ \citep[cf.][]{BrownBook}.   Note that the sum, $\mathcal{S}(\boldsymbol y) = {M}_1\boldsymbol y \dots +{M}_J\boldsymbol y = \boldsymbol 1\tr\mathbf{M}\boldsymbol y$,  is the total number of experiments performed to obtain the data.  In the sequel,  $\mathcal{S}(\boldsymbol y)$ will be called \textit{exposure size}.

\begin{theorem}\label{theoremMatrixOneFloret}
Let $\mathcal{M}$ be a one-floret tree model (\ref{DefOneFloret}) with a design matrix $\mathbf{M}$.   Assume that  the MLEs of the edge probabilities $\hat{\boldsymbol \theta}$ and  leaf probabilities $\hat{\boldsymbol p}$  given $\boldsymbol y$  exist. Then, the following statements hold:
\begin{enumerate}[(i)]
\item 
The MLEs are equal to, respectively, 
\begin{equation}\label{MLEOneclosed1}
\hat{\boldsymbol \theta}=   \frac{1}{\mathcal{S}({\boldsymbol{y}})} \cdot \mathbf{M}\boldsymbol y  \quad \mbox{and }\quad  \hat{\boldsymbol p} = \hat{\boldsymbol \theta}^{\mathbf{M}\tr}.
\end{equation}
\item For each $j = 1 \dots, J$ and $\hat{\boldsymbol y} = N\hat{\boldsymbol p}$,
\begin{equation}\label{adjFoneFloretJ}
\frac{{M}_j\hat{\boldsymbol y}} {\mathcal{S}(\hat{\boldsymbol y})}=  \frac{{M}_j {\boldsymbol y}}{\mathcal{S}({\boldsymbol{y}})}.
\end{equation}
\item Further,
\begin{equation}\label{adjFoneFloretRiskJ}
\frac{{M}_1\hat{\boldsymbol y}} {{M}_1 {\boldsymbol y}}= \dots = \frac{{M}_J\hat{\boldsymbol y}} {{M}_J{\boldsymbol y}} = \frac{\mathcal{S}(\hat{\boldsymbol y})}{\mathcal{S}({\boldsymbol y})}.
\end{equation}
\item Finally,   ${\mathcal{S}(\hat{\boldsymbol y})} = {\mathcal{S}({\boldsymbol y})}$ if and only if the row $\boldsymbol 1\tr = (1, \dots, 1)$ is in the row space of $\mathbf{M}$. 
\end{enumerate}
\end{theorem}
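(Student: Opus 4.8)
The plan is to prove the four parts in order, since (ii)--(iv) build on (i). For (i) I would just find the critical point of the Lagrange function $\mathcal{L}$ in (\ref{LLoneF1}): $\partial\mathcal{L}/\partial\theta_j=0$ gives $M_j\boldsymbol y/\theta_j=\alpha$ for every $j$, and substituting $\theta_j=M_j\boldsymbol y/\alpha$ into $\boldsymbol 1\tr\boldsymbol\theta=1$ forces $\alpha=\sum_j M_j\boldsymbol y=\boldsymbol 1\tr\mathbf{M}\boldsymbol y=\mathcal{S}(\boldsymbol y)$; hence $\hat{\boldsymbol\theta}=\mathbf{M}\boldsymbol y/\mathcal{S}(\boldsymbol y)$, while $\hat{\boldsymbol p}=\hat{\boldsymbol\theta}^{\mathbf{M}\tr}$ is immediate from the definition of $\mathcal{M}$; uniqueness is the strict concavity of the kernel in $\log\boldsymbol\theta$ already noted in the text. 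I would also record here that existence of $\hat{\boldsymbol\theta}$ is equivalent to $M_j\boldsymbol y>0$ for all $j$, for use in (iii).

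The common engine for (ii)--(iv) is a single structural identity: for \emph{every} $\boldsymbol q\in\mathcal{M}$ with parameter $\boldsymbol\tau$ one has $\mathbf{M}\boldsymbol q=\mathcal{S}(\boldsymbol q)\,\boldsymbol\tau$, i.e. $M_j\boldsymbol q/\mathcal{S}(\boldsymbol q)=\tau_j$. I would prove this by path-counting: reading $\mu_{ji}$ as the number of non-leaf nodes $v$ on the root-to-$i$ path whose edge toward $i$ carries parameter $\tau_j$, the sum $\sum_i\mu_{ji}q_i$ regroups by $v$; using that the $\boldsymbol q$-mass of all leaves below a node $w$ equals its path probability $\pi_w$ (a short downward induction from $\sum_j\tau_j=1$) gives $M_j\boldsymbol q=\tau_j\sum_v\pi_v$, the sum over all non-leaf nodes, and then $\mathcal{S}(\boldsymbol q)=\boldsymbol 1\tr\mathbf{M}\boldsymbol q=\sum_v\pi_v$, so dividing gives the claim. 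This step uses that, all of $X_1,\dots,X_K$ sharing the outcome set, every non-leaf node has a child along every edge $j$; checking that carefully is where I expect the only real friction.

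Part (ii) then drops out: apply the identity to $\boldsymbol q=\hat{\boldsymbol p}$ to get $M_j\hat{\boldsymbol p}/\mathcal{S}(\hat{\boldsymbol p})=\hat\theta_j$, which equals $M_j\boldsymbol y/\mathcal{S}(\boldsymbol y)$ by (i); since $\hat{\boldsymbol y}=N\hat{\boldsymbol p}$ the factor $N$ cancels on the left. For (iii), rewrite (ii) as $M_j\hat{\boldsymbol y}=\big(\mathcal{S}(\hat{\boldsymbol y})/\mathcal{S}(\boldsymbol y)\big)M_j\boldsymbol y$ for all $j$; as the MLE exists every $M_j\boldsymbol y$ is positive, so each ratio $M_j\hat{\boldsymbol y}/M_j\boldsymbol y$ equals the single value $\mathcal{S}(\hat{\boldsymbol y})/\mathcal{S}(\boldsymbol y)$, consistently on summing over $j$.

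For (iv), if $\boldsymbol 1\tr=\boldsymbol c\tr\mathbf{M}$ then $N=\boldsymbol 1\tr\boldsymbol y=\boldsymbol c\tr\mathbf{M}\boldsymbol y$ and $N=\boldsymbol 1\tr\hat{\boldsymbol y}=\boldsymbol c\tr\mathbf{M}\hat{\boldsymbol y}=\big(\mathcal{S}(\hat{\boldsymbol y})/\mathcal{S}(\boldsymbol y)\big)N$ by (iii), forcing $\mathcal{S}(\hat{\boldsymbol y})=\mathcal{S}(\boldsymbol y)$. The converse cannot hold data-point by data-point (for $\boldsymbol y$ lying on $\mathcal{M}$ the equality is automatic), so I read (iv) as: the equality holds for every admissible $\boldsymbol y$ iff $\boldsymbol 1\tr$ is in the row space, and I would prove the contrapositive. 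If $\boldsymbol 1\tr$ is not in the row space, choose $\boldsymbol v\in\ker\mathbf{M}$ with $\boldsymbol 1\tr\boldsymbol v\neq0$ (the kernel is the orthogonal complement of the row space). For admissible $\boldsymbol y$ and small $t$, $\boldsymbol y+t\boldsymbol v$ is again admissible with the same $\mathbf{M}\boldsymbol y$, hence the same $\hat{\boldsymbol\theta}$, the same $\hat{\boldsymbol p}$, and the same $\mathcal{S}(\boldsymbol y)=\boldsymbol 1\tr\mathbf{M}\boldsymbol y$; but the sample size becomes $N+t\,\boldsymbol 1\tr\boldsymbol v$, so $\mathcal{S}(\hat{\boldsymbol y})=(N+t\,\boldsymbol 1\tr\boldsymbol v)\,\mathcal{S}(\hat{\boldsymbol p})$ with $\mathcal{S}(\hat{\boldsymbol p})>0$ varies with $t$ while $\mathcal{S}(\boldsymbol y)$ does not. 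The main obstacle is this converse: one must see the identity as the statement that the exposure factor is \emph{structurally} trivial --- a property of $\mathbf{M}$, not of a single data vector --- and then exploit the freedom in $\ker\mathbf{M}$ to move $N$ while freezing $\mathbf{M}\boldsymbol y$.
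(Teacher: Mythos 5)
Your proof is correct, and for parts (ii)--(iv) it takes a genuinely different route from the paper. Part (i) is the same Lagrangian computation. For (ii)--(iv) the paper does not argue on the tree at all: it invokes the generalized mean-value theorem for exponential families (Theorem 3.3 of the cited relational-models paper) to get the existence of a single $\gamma>0$ with $M_j\hat{\boldsymbol p}=(\gamma/N)M_j\boldsymbol y$ for all $j$, sums over $j$ to identify $\gamma$ as the exposure ratio, and then cites the same external theorem for the equivalence in (iv). You instead prove the self-contained structural identity $\mathbf{M}\boldsymbol q=\mathcal{S}(\boldsymbol q)\,\boldsymbol\tau$ for every $\boldsymbol q\in\mathcal{M}$ by regrouping $\sum_i\mu_{ji}q_i$ over non-leaf nodes and using that the leaf mass below a node equals its path probability; this is valid (every node of the single floret carries every edge $j$, and the terminal/non-terminal cases contribute $\pi_v\tau_j$ alike), and it identifies $\mathcal{S}(\boldsymbol q)$ concretely as $\sum_v\pi_v$. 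What the paper's route buys is brevity and automatic uniqueness of $\gamma$ from the general theory; what yours buys is an elementary, citation-free argument plus a constructive handle on the converse of (iv) via $\ker\mathbf{M}$. Your observation that (iv) as literally stated cannot hold for a fixed data vector (it is automatic whenever $\boldsymbol y/N\in\mathcal{M}$) and must be read as a statement over all admissible $\boldsymbol y$ is a fair and correct refinement; the paper sidesteps this by citing the external theorem, which is likewise a statement about the model matrix rather than about one data point. The perturbation argument along a kernel vector not orthogonal to $\boldsymbol 1$ is sound, since $\boldsymbol 1\notin\mathrm{rowspace}(\mathbf{M})$ is exactly the condition that such a vector exists.
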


\begin{proof}
\begin{enumerate}[(i)]
\item The differentiation of the Lagrangian (\ref{LLoneF1}) with respect to $\boldsymbol \theta$ and $\alpha$ yields
\begin{align}\label{likeOneEqf_M}
&{\partial \mathcal{{L}}}/{\partial \boldsymbol \theta} = diag({\boldsymbol \theta}^{-1})\mathbf{M}\boldsymbol y -
\alpha\cdot \boldsymbol 1 = \boldsymbol 0, \\
&{\partial \mathcal{{L}}}/{\partial  \alpha} = \boldsymbol 1'\boldsymbol \theta - 1 = 0.\nonumber
\end{align}
Left-multiplying the first equation of (\ref{likeOneEqf_M}) by $diag({\boldsymbol \theta})$ yields that $\mathbf{M}\boldsymbol y = \alpha \boldsymbol \theta$, and thus, from the second equation  of (\ref{likeOneEqf_M}), $\hat{\alpha} =  \boldsymbol 1 \tr \mathbf{M}\boldsymbol y = \mathcal{S}({\boldsymbol{y}})$. Therefore,  
\begin{equation*}
\hat{\boldsymbol \theta}=   \frac{1}{\mathcal{S}({\boldsymbol{y}})} \cdot{\mathbf{M}\boldsymbol y},   
\end{equation*}
and, from (\ref{DefOneFloret}), $\hat{\boldsymbol p} = \hat{\boldsymbol \theta}^{\mathbf{M}\tr}$. 
\item[(ii, iii)] To prove, one can use, for example, the generalized mean-value theorem of \cite{KRD11}. Because the model $\mathcal{M}$ is an exponential family that satisfies the conditions of this Theorem 3.3,  there exists a unique $\gamma > 0$, such that,
\begin{equation}\label{subsetSumsOneFloret}
{M}_1 \hat{\boldsymbol p}= \frac{\gamma}{N}M_1{\boldsymbol y}, \,\, \dots,  \quad {M}_J\hat{\boldsymbol p}= \frac{\gamma}{N}M_J{\boldsymbol y}.
\end{equation}
Thus, 
$$
{\mathcal{S}(\hat{\boldsymbol y})} = \boldsymbol 1\tr \mathbf{M} \hat{\boldsymbol y} = N\sum_{j = 1}^J {M}_j \hat{\boldsymbol p} =  \gamma \sum_{j = 1}^J {M}_j {\boldsymbol y}  =   \gamma \boldsymbol 1\tr \mathbf{M} {\boldsymbol y} = \gamma \mathcal{S}(\boldsymbol y),$$
from which 
\begin{equation}\label{gammaOneFloret}
{\gamma} = 
\frac{\mathcal{S}(\hat{\boldsymbol y})}{\mathcal{S}({\boldsymbol{y}})}.
\end{equation}
The statements (\ref{adjFoneFloretJ}) and (\ref{adjFoneFloretRiskJ}) are now immediate.
\item[(iv)]  Theorem 3.3 of \cite{KRD11} applies. The condition that $\boldsymbol 1\tr = (1, \dots, 1) \in rowspace (\mathbf{M})$ is necessary and sufficient for $\gamma = 1$, that is,   for ${\mathcal{S}(\hat{\boldsymbol y})} = {\mathcal{S}({\boldsymbol{y}})}$. 

\end{enumerate}
\end{proof}

The value of  $\gamma$ found  in (\ref{gammaOneFloret}) is the ratio between the expected and observed exposure size, and in the sequel will be referred to as \textit{exposure ratio}. Theorem implies that, for a model $\mathcal{M}$, the sufficient statistics of the MLE, $\mathbf{M} \hat{\boldsymbol y} $, are, in general, not equal but only proportional to  the observed sufficient statistics, $\mathbf{M} {\boldsymbol y}$, with the common coefficient of proportionality, $\gamma$, equal to the exposure ratio. 

The MLE computation  is demonstrated using Examples  \ref{ExHWEq.1} and \ref{ExCalvesDef}.  

\begin{repexample} {ExHWEq.1}\textbf{(revisited)}
Given an observed frequency distribution $\boldsymbol y = ( y_{1},  y_{2},  y_{3}, y_{4}) \tr \sim Mult(N, (p_{1}, p_{2}, p_{3}, p_{4}))$,  the kernel of the log-likelihood function under the model (\ref{modelHWeq}) is equal to 
$L(\boldsymbol \theta, \boldsymbol y) = (2y_{1} +  y_{2} + y_{3})\log \theta_1 + (y_{2} + y_{3} +  2y_{4})\log\theta_2$.  The statistics $(2y_{1} +  y_{2} + y_{3})$ and $(y_{2} + y_{3} +  2y_{4})$ are the sufficient statistics of the model. It is straightforward to show that it is maximized by
$$\hat{\theta}_1  = \frac{2y_{1} +  y_{2} + y_{3}}{2N}, \quad \hat{\theta}_2  = \frac{y_{2} +  y_{3} + 2y_{4}}{2N}.$$
The MLE of $\boldsymbol p$ is equal to:
\small{
\begin{align*}
\hat{\boldsymbol p} &=\left(\hat{\theta}_1^2,\, \hat{\theta}_1\hat{\theta}_2,\, \hat{\theta}_1\hat{\theta}_2,\, \hat{\theta}_2^2\right)\tr \\
&= \left (\frac{(2y_{1} + y_{2} +  y_{3})^2}{4N^2}, \frac{(2y_{1} + y_{2} +  y_{3})(y_{2} + y_{3} +  2y_{4})}{4N^2}, \frac{(2y_{1} + y_{2} +  y_{3}) (y_{2} + y_{3} +  2y_{4})}{4N^2},  \frac{(y_{2} + y_{3} +  2y_{4})^2}{4N^2}\right)\tr.
\end{align*}
}
All column sums of the design matrix  $\mathbf{M}$ are equal to $2$, and therefore, $\boldsymbol 1\tr \in rowspace (\mathbf{M})$.  As follows from Theorem \ref{theoremMatrixOneFloret}(iv), the \text{exposure ratio}  $\gamma = 1$, which can be also verified directly by substitution.
\qed
\end{repexample} 

\vspace{3mm}

\begin{repexample} {ExCalvesDef}\textbf{(revisited)}

Given $\boldsymbol y = ( y_{1},  y_{2},  y_{3}) \tr \sim Mult(N, (p_{1}, p_{2}, p_{3}))$, the kernel of log-likelihood function is equal to 
$L(\boldsymbol \theta, \boldsymbol y) = (2y_{1} +  y_{2})\log \theta_1 + (y_{2} +  y_{3})\log\theta_2$. 
The statistics $(2y_{1} +  y_{2})$ and $(y_{2} +  y_{3})$ are the sufficient statistics of the model, and the MLEs of $\theta_1$ and  $\theta_2$ are equal to:
$$\hat{\theta}_1  = \frac{2y_{1} +  y_{2}}{2y_{1} + 2 y_{2} + y_{3}}, \quad \hat{\theta}_2  = \frac{y_{2} +  y_{3}}{2y_{1} + 2 y_{2} + y_{3}}.$$ 
The result is intuitive,  because the denominator $2y_{1} + 2 y_{2} + y_{3}$ is equal to the total number of times the calves were exposed to the infection, and the numerator for  $\hat{\theta}_1$ is the number of calves who acquired the infection at least once, while the numerator for  $\hat{\theta}_2$ is the number of calves who did not acquire any infection. 
The MLEs of the leaf probabilities are equal to:
$$\hat{p}_{1} = \hat{\theta}_1^2 = \frac{(2y_{1} +  y_{2})^2}{(2y_{1} + 2 y_{2} + y_{3})^2}, \quad \hat{p}_{2} = \hat{\theta}_1 \hat{\theta}_2 = \frac{(2y_{1} +  y_{2})(y_{2} +  y_{3})}{(2y_{1} + 2 y_{2} + y_{3})^2},  \quad \hat{p}_{3} = \hat{\theta}_2 = \frac{y_{2} +  y_{3}}{(2y_{1} + 2 y_{2} + y_{3})}.$$

Notice that the row space of the design matrix $\mathbf{M}$ in (\ref{calvesM})  does not contain the row $(1,1,1)$.   The estimated values of sufficient statistics are proportional,  but, in general,  not equal, to those observed: 
$$N(2\hat{p}_{1} +  \hat{p}_{2}) = \gamma (2y_{1} +  y_{2}), \quad N(\hat{p}_{2} + \hat{p}_{3})  = \gamma (y_{2} +  y_{3}),$$
and the exposure ratio $\gamma$ is equal to: 
\begin{align*} 
{\gamma} &= \frac{N(2\hat{p}_{1} +  2\hat{p}_{2} + \hat{p}_{3})}{2y_{1} + 2 y_{2} + y_{3}}.  
\end{align*}
The numerator in ${\gamma}$ can be written as:  
$N(2\hat{p}_{1} +  2\hat{p}_{2} + \hat{p}_{3})$ $= {N(2 \hat{\theta}_1^2 +  2\hat{\theta}_1 \hat{\theta}_2 + \hat{\theta}_2)}$  $=  {N(\hat{\theta}_1 + 1)},$ and is indeed the expected number of exposures as  seen in the tree in Figure \ref{TreeCalves}.
 \qed 
\end{repexample}

\vspace{3mm}

The asymptotic distributions of the MLE of the model parameters and leaf probabilities are derived next. 
Let $\mathcal{M}$ be a   one-floret model  (\ref{DefOneFloret}) parameterized by $\boldsymbol \theta = (\theta_1,  \dots, \theta_{J})\tr$.   By the model definition,  the sum-to-one constraint, $\boldsymbol 1\tr\boldsymbol \theta = 1$, is the only restriction on model parameters and, therefore,  by setting $\theta_J =  1 - \theta_1 - \dots - \theta_{J-1}$,  the model can be expressed  in terms of $J-1$ non-redundant parameters, $\tilde{\boldsymbol \theta} \coloneqq(\theta_1,  \dots, \theta_{J-1})\tr$. The parameter space is therefore $\boldsymbol \Omega = \{\tilde{\boldsymbol \theta} \in (0,1)^{J-1}: \,\,\boldsymbol 1\tr\tilde{\boldsymbol \theta} < 1 \}$. 
Let $\boldsymbol \theta_0 = (\tilde{\boldsymbol \theta}_0, {\theta}_{J0})\tr$ denote the corresponding true values. The simplify the notation,   the value  of the matrix of the first derivatives of $ \boldsymbol p$ with respect to  $\tilde{\boldsymbol \theta}$, $(\partial \boldsymbol p /\, \partial \tilde{\boldsymbol \theta})$,  at $\boldsymbol \theta_0$ will be denoted by  $(\partial \boldsymbol p / \, \partial \boldsymbol \theta_0)$. Finally, let $\mathbf{A} = diag(\boldsymbol p_0)^{-1/2}(\partial \boldsymbol p /\, \partial \boldsymbol \theta_0)$.

\begin{theorem}\label{theoremAgrestiAsymptotic}
Let $\mathcal{M}$ be a one-floret model and assume that the true value $\boldsymbol p_0  = \boldsymbol p(\boldsymbol \theta_0) \in \mathcal{M}$.  Given data $\boldsymbol y \sim Mult(N, \boldsymbol p_0)$, if the MLE $\hat{\tilde{\boldsymbol \theta}}$ of $\tilde{\boldsymbol \theta}_0$ and $\hat{\boldsymbol p} =  \boldsymbol p (\hat{\boldsymbol \theta})$ of $\boldsymbol p_0$ exist,  the asymptotic distributions of $\hat{\tilde{\boldsymbol \theta}}$ and $\hat{\boldsymbol p}$ are, respectively: 
$$\sqrt{N}(\hat{\tilde{\boldsymbol \theta}} - \tilde{\boldsymbol \theta}_0) \overset{d}{\rightarrow} \mathcal{N}(\boldsymbol 0, \boldsymbol{\Phi}_{{\tilde{\boldsymbol{\theta}}}}),$$
$$\sqrt{N}(\hat{\boldsymbol p} - \boldsymbol p_0) \overset{d}{\rightarrow} \mathcal{N}(\boldsymbol 0, \boldsymbol{\Phi}_{{\boldsymbol{p}}} ),$$
\noindent where $\overset{d}{\rightarrow}$ denotes convergence in distribution, and the corresponding asymptotic covariance matrices  are equal to:  
\begin{equation}\label{AsCovProbOneFloret}
\boldsymbol{\Phi}_{{\tilde{\boldsymbol{\theta}}}} =(\mathbf{A}\tr \mathbf{A})^{-1}\quad \mbox{ and } \quad \boldsymbol{\Phi}_{{\boldsymbol{p}}} =\frac{\partial \boldsymbol p}{\partial \boldsymbol \theta_0}(\mathbf{A}\tr \mathbf{A})^{-1}\frac{\partial \boldsymbol p\tr}{\partial \boldsymbol \theta_0}.
\end{equation}
\end{theorem}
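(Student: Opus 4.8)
The plan is to view $\mathcal{M}$ as a smooth, regular parametric submodel of the multinomial family on $\mathcal{I}$ and to invoke the classical large-sample theory for maximum likelihood estimators: once the MLE exists, it is consistent and asymptotically normal with asymptotic covariance equal to the inverse per-observation Fisher information. The only model-specific work is then (a) to identify this Fisher information, $\mathcal{I}(\boldsymbol\theta_0)$, with $\mathbf{A}\tr\mathbf{A}$, and (b) to transport the result for $\hat{\tilde{\boldsymbol\theta}}$ to $\hat{\boldsymbol p}$ by the multivariate delta method. The required regularity is inherited from two facts already recorded: written in terms of $\log\boldsymbol\theta$ the one-floret model is a regular exponential family, and the map $\tilde{\boldsymbol\theta}\mapsto\boldsymbol p(\tilde{\boldsymbol\theta})=\boldsymbol\theta^{\mathbf{M}\tr}$ is real-analytic on the open parameter set $\boldsymbol\Omega$, with $\boldsymbol 1\tr\boldsymbol p\equiv 1$ there. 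Hence, on $\boldsymbol\Omega$, the score and Hessian exist, differentiation under $\boldsymbol 1\tr\boldsymbol p=1$ is legitimate, and the standard expansion
\[
\sqrt N\big(\hat{\tilde{\boldsymbol\theta}}-\tilde{\boldsymbol\theta}_0\big)=\mathcal{I}(\boldsymbol\theta_0)^{-1}\,\frac{1}{\sqrt N}\sum_{n=1}^{N}\nabla\ell_n(\tilde{\boldsymbol\theta}_0)+o_p(1)\;\overset{d}{\rightarrow}\;\mathcal N\big(\boldsymbol 0,\mathcal{I}(\boldsymbol\theta_0)^{-1}\big)
\]
holds, where $\ell_n$ is the single-trial multinomial log-likelihood.

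Next I would compute $\mathcal{I}(\boldsymbol\theta_0)$ directly. For a single observation falling in cell $Z\in\{1,\dots,I\}$ one has $\ell(\tilde{\boldsymbol\theta})=\log p_Z(\tilde{\boldsymbol\theta})$, so the $k$-th score component is $p_Z^{-1}\,\partial p_Z/\partial\theta_k$; its expectation is $\sum_i \partial p_i/\partial\theta_k=\partial(\boldsymbol 1\tr\boldsymbol p)/\partial\theta_k=0$, and therefore
\[
\mathcal{I}_{kl}(\boldsymbol\theta_0)=\sum_{i=1}^{I}\frac{1}{p_{0i}}\,\frac{\partial p_i}{\partial\theta_{0k}}\,\frac{\partial p_i}{\partial\theta_{0l}}.
\]
That is, $\mathcal{I}(\boldsymbol\theta_0)=\frac{\partial\boldsymbol p\tr}{\partial\boldsymbol\theta_0}\,diag(\boldsymbol p_0)^{-1}\,\frac{\partial\boldsymbol p}{\partial\boldsymbol\theta_0}=\mathbf{A}\tr\mathbf{A}$, since $\mathbf{A}=diag(\boldsymbol p_0)^{-1/2}(\partial\boldsymbol p/\partial\boldsymbol\theta_0)$. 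Combined with the expansion above this gives $\boldsymbol{\Phi}_{\tilde{\boldsymbol\theta}}=(\mathbf{A}\tr\mathbf{A})^{-1}$, provided $\mathbf{A}\tr\mathbf{A}$ is nonsingular; because $diag(\boldsymbol p_0)^{-1/2}$ is nonsingular, this is equivalent to the Jacobian $\partial\boldsymbol p/\partial\boldsymbol\theta_0$ having full column rank $J-1$, which is exactly the local identifiability implicit in $\mathcal{M}$ being a genuine $(J-1)$-parameter exponential family and is the place where the structure of the design matrix $\mathbf{M}$ enters.

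Finally, since $\hat{\boldsymbol p}=\boldsymbol p(\hat{\tilde{\boldsymbol\theta}})$ with $\boldsymbol p(\cdot)$ continuously differentiable at $\tilde{\boldsymbol\theta}_0$, the multivariate delta method applied to the limit already obtained for $\hat{\tilde{\boldsymbol\theta}}$ yields $\sqrt N(\hat{\boldsymbol p}-\boldsymbol p_0)\overset{d}{\rightarrow}\mathcal N(\boldsymbol 0,\boldsymbol{\Phi}_{\boldsymbol p})$ with $\boldsymbol{\Phi}_{\boldsymbol p}=\frac{\partial\boldsymbol p}{\partial\boldsymbol\theta_0}(\mathbf{A}\tr\mathbf{A})^{-1}\frac{\partial\boldsymbol p\tr}{\partial\boldsymbol\theta_0}$, the second asserted formula. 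The step I expect to require the most care is not the asymptotics itself but the surrounding regularity bookkeeping: verifying full column rank of $\partial\boldsymbol p/\partial\boldsymbol\theta_0$ (hence invertibility of $\mathbf{A}\tr\mathbf{A}$) and checking that the standard MLE argument applies on the open set $\boldsymbol\Omega$ in which the MLE is assumed to lie. As a cross-check, one could instead argue from the closed form $\hat{\boldsymbol\theta}=\mathbf{M}\boldsymbol y/\mathcal{S}(\boldsymbol y)$ of Theorem~\ref{theoremMatrixOneFloret}(i): apply the delta method to $\boldsymbol u\mapsto\mathbf{M}\boldsymbol u/(\boldsymbol 1\tr\mathbf{M}\boldsymbol u)$ at $\boldsymbol u=\boldsymbol p_0$, using $\sqrt N(\boldsymbol y/N-\boldsymbol p_0)\overset{d}{\rightarrow}\mathcal N(\boldsymbol 0,diag(\boldsymbol p_0)-\boldsymbol p_0\boldsymbol p_0\tr)$, then compose with $\boldsymbol\theta\mapsto\boldsymbol p$, and simplify the covariance algebraically to the $(\mathbf{A}\tr\mathbf{A})^{-1}$ form.
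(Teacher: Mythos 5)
Your proposal is correct and is essentially the argument the paper relies on: the paper omits the proof and simply cites Birch (1964) and Theorem 14.8-4 of Bishop, Fienberg and Holland, which is exactly the classical result you reconstruct (consistent, asymptotically normal MLE with per-observation Fisher information $\mathbf{A}\tr\mathbf{A}$, followed by the delta method for $\hat{\boldsymbol p}$). Your added attention to the regularity bookkeeping --- full column rank of $\partial\boldsymbol p/\partial\boldsymbol\theta_0$ and the openness of $\boldsymbol\Omega$ --- corresponds precisely to verifying Birch's conditions, so nothing is missing.
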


The statements follow from \cite{Birch64proof}, see also Theorem 14.8-4 in \cite{BFH}, among others, and the proof is omitted.  For a one-floret model a simpler expression for $\boldsymbol{\Phi}_{\tilde{\boldsymbol{\theta}}}$  can be obtained. 

\begin{theorem}\label{theoremMatrixProbOneFloret}
For a one-floret model $\mathcal{M}$ the following holds:
\begin{enumerate}[(i)]
\item 
\begin{equation}\label{OneFloretMxA}
\mathbf{A}\tr \mathbf{A} = \mathcal{S}(\boldsymbol p(\boldsymbol{\theta}_0))\cdot( diag(\tilde{\boldsymbol \theta}_0^{-1})  + \boldsymbol 1 \cdot \boldsymbol 1\tr  \theta_{J0}^{-1}),
\end{equation}
where $\mathcal{S}(\boldsymbol p(\boldsymbol{\theta}_0)) =  \boldsymbol{1} \tr  \mathbf{M} \boldsymbol{p}(\boldsymbol{\theta}_0)$ is the exposure total evaluated at $\boldsymbol p(\boldsymbol \theta_0)$.
\item The asymptotic covariance matrix of $\sqrt{N}\hat{\tilde{\boldsymbol \theta}}$ is equal to:  
\begin{equation}\label{AsCovMultiOneFloret}
\boldsymbol{\Phi}_{\tilde{\boldsymbol{\theta}}} =\frac{1}{ {\mathcal{S}(p(\boldsymbol{\theta}_0))}} \cdot (diag(\tilde{\boldsymbol \theta}_0)  - \boldsymbol 1 \cdot \boldsymbol 1\tr  \theta_{J0}). \end{equation}
\end{enumerate}
\end{theorem}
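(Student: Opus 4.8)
The plan is to recognise $\mathbf A\tr\mathbf A$ as the per-observation Fisher information of the multinomial model written in the free parameters $\tilde{\boldsymbol\theta}$, and to evaluate it through the \emph{Hessian} of the single-observation log-likelihood rather than by expanding $\mathbf A\tr\mathbf A$ directly. The latter route would force us to handle the awkward second-order sums $\sum_i p_{0i}\mu_{ji}\mu_{mi}$ (expected products of edge-label counts), which have no convenient closed form; the Hessian route avoids them, because by (\ref{LLoneF}) the log-likelihood is \emph{linear} in the statistics $M_j\boldsymbol y$, so its Hessian is linear in them too, and taking expectations needs only the first moments $\mathbb E_{\boldsymbol\theta_0}[M_j\boldsymbol Y]=M_j\boldsymbol p(\boldsymbol\theta_0)$. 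Part (ii) is then immediate from part (i): Theorem \ref{theoremAgrestiAsymptotic} gives $\boldsymbol\Phi_{\tilde{\boldsymbol\theta}}=(\mathbf A\tr\mathbf A)^{-1}$, so it only remains to invert the diagonal-plus-rank-one matrix (\ref{OneFloretMxA}) by the Sherman--Morrison formula.

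For (i), write $\mathbf A\tr\mathbf A=(\partial\boldsymbol p/\partial\boldsymbol\theta_0)\tr diag(\boldsymbol p_0)^{-1}(\partial\boldsymbol p/\partial\boldsymbol\theta_0)$; since $(\partial\boldsymbol p/\partial\boldsymbol\theta_0)_{ik}=p_{0i}\,\partial\log p_i/\partial\theta_k$, this equals $\mathbb E_{\boldsymbol\theta_0}[(\partial\ell/\partial\tilde{\boldsymbol\theta})(\partial\ell/\partial\tilde{\boldsymbol\theta})\tr]$, the Fisher information $\mathbf I(\tilde{\boldsymbol\theta}_0)$, which by the information identity (here just $\sum_i\partial^2 p_i/\partial\tilde{\boldsymbol\theta}\partial\tilde{\boldsymbol\theta}\tr=\boldsymbol 0$, from $\sum_i p_i\equiv 1$) also equals $-\mathbb E_{\boldsymbol\theta_0}[\partial^2\ell/\partial\tilde{\boldsymbol\theta}\partial\tilde{\boldsymbol\theta}\tr]$, where $\ell(\tilde{\boldsymbol\theta};\boldsymbol Y)=\sum_{j=1}^J(M_j\boldsymbol Y)\log\theta_j$ with $\theta_J=1-\boldsymbol 1\tr\tilde{\boldsymbol\theta}$. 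Differentiating twice, $\partial^2\ell/\partial\theta_k\partial\theta_l=-\delta_{kl}(M_k\boldsymbol Y)/\theta_k^2-(M_J\boldsymbol Y)/\theta_J^2$, so $(\mathbf A\tr\mathbf A)_{kl}=\delta_{kl}\,M_k\boldsymbol p(\boldsymbol\theta_0)/\theta_{k0}^2+M_J\boldsymbol p(\boldsymbol\theta_0)/\theta_{J0}^2$. The crux is then the identity $M_j\boldsymbol p(\boldsymbol\theta)=\theta_j\,\mathcal S(\boldsymbol p(\boldsymbol\theta))$ for $j=1,\dots,J$, equivalently $\mathbf M\boldsymbol p(\boldsymbol\theta)=\mathcal S(\boldsymbol p(\boldsymbol\theta))\,\boldsymbol\theta$. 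I would prove it probabilistically: on a random root-to-leaf path, $M_j\boldsymbol p$ is the expected number of edges carrying label $j$ and $\mathcal S(\boldsymbol p)$ the expected number of edges; because a one-floret model assigns the \emph{same} edge distribution $\boldsymbol\theta$ to every internal node, conditionally on the path having at least $t$ steps its $t$-th edge carries label $j$ with probability exactly $\theta_j$, and summing over $t$ gives $M_j\boldsymbol p=\theta_j\,\mathcal S(\boldsymbol p)$. (Alternatively: induct on the depth of $\mathcal T$, removing the root and noting each pendant subtree is again a one-floret tree with parameter $\boldsymbol\theta$; or apply Theorem \ref{theoremMatrixOneFloret}(i) to the data $\boldsymbol y=N\boldsymbol p(\boldsymbol\theta)$, on which the MLE must reproduce $\boldsymbol\theta$.) Substituting $M_j\boldsymbol p(\boldsymbol\theta_0)=\theta_{j0}\,\mathcal S(\boldsymbol p(\boldsymbol\theta_0))$ collapses the expression for $(\mathbf A\tr\mathbf A)_{kl}$ to $\mathcal S(\boldsymbol p(\boldsymbol\theta_0))\,(\delta_{kl}\theta_{k0}^{-1}+\theta_{J0}^{-1})$, which is (\ref{OneFloretMxA}).

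The main obstacle is precisely that identity $M_j\boldsymbol p(\boldsymbol\theta)=\theta_j\,\mathcal S(\boldsymbol p(\boldsymbol\theta))$: this is where the path-dependence of the design must be controlled, since the optional-stopping argument above would break down if distinct internal nodes could carry different edge distributions — it is the homogeneity imposed by the single-floret hypothesis that makes the conditional edge-label probability equal to $\theta_j$ at every step, whatever the (possibly varying) pattern of terminal versus non-terminal children. Once this is in hand, part (ii) is routine: applying Sherman--Morrison to invert $diag(\tilde{\boldsymbol\theta}_0^{-1})+\theta_{J0}^{-1}\boldsymbol 1\boldsymbol 1\tr$, together with $\boldsymbol 1\tr diag(\tilde{\boldsymbol\theta}_0)\boldsymbol 1=1-\theta_{J0}$, yields the closed form (\ref{AsCovMultiOneFloret}), and both this inversion and the Hessian computation above are mechanical.
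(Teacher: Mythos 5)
Your proposal is correct, but it reaches (\ref{OneFloretMxA}) by a genuinely different and more elementary route than the paper. The paper never touches the Fisher-information/Hessian identity: it instead introduces the auxiliary function $Q_p=\sum_i p_i\log\hat p_i$, invokes Proposition 2.5 of \cite{Kapranov1991} (the Horn-type identity $\log\hat p_k=\partial Q_p/\partial p_k$), proves the algebraic identity $Q_p=W_z$ with $W_z=\sum_j z_j\log z_j-z_{J+1}\log z_{J+1}$, and then differentiates both sides twice in $\tilde{\boldsymbol\theta}$ and matches terms to isolate $\frac{\partial\boldsymbol p\tr}{\partial\boldsymbol\theta}\,diag(\boldsymbol p^{-1})\,\frac{\partial\boldsymbol p}{\partial\boldsymbol\theta}$. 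You bypass all of that by observing that $\log p_i(\boldsymbol\theta)=\sum_j\mu_{ji}\log\theta_j$ is linear in $\log\boldsymbol\theta$, so the single-observation Hessian involves only the first moments $M_j\boldsymbol p_0$, and the information identity does the rest; this buys a shorter, self-contained argument that does not need the external result from \cite{Kapranov1991}. Both proofs ultimately hinge on the same key fact, namely $\mathbf M\boldsymbol p(\boldsymbol\theta)=\mathcal S(\boldsymbol p(\boldsymbol\theta))\,\boldsymbol\theta$ (in the paper this is the relation $z_j=\theta_j z_{J+1}$, justified by the MLE fixed-point property of Theorem \ref{theoremMatrixOneFloret}(i) applied to model data); your probabilistic proof of it --- conditioning on the path surviving to step $t$ and using the common edge distribution at every internal node --- is valid and correctly identifies where the one-floret hypothesis enters, so this is not a gap. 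One caveat on part (ii): carried to the end, your Sherman--Morrison inversion gives $(diag(\tilde{\boldsymbol\theta}_0^{-1})+\theta_{J0}^{-1}\boldsymbol 1\boldsymbol 1\tr)^{-1}=diag(\tilde{\boldsymbol\theta}_0)-\tilde{\boldsymbol\theta}_0\tilde{\boldsymbol\theta}_0\tr$, the usual multinomial covariance, not literally $diag(\tilde{\boldsymbol\theta}_0)-\boldsymbol 1\boldsymbol 1\tr\theta_{J0}$ as displayed in (\ref{AsCovMultiOneFloret}); the former is what the worked examples (e.g.\ $\theta_1(1-\theta_1)/2$ in Example \ref{ExHWEq.1}) and the Comment after Theorem \ref{theoremMatrixProbMultiFloret} actually use, so your computation lands on the intended matrix and the discrepancy is with the printed formula rather than with your argument.
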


\begin{proof}
The proof of (\textit{i}) is given in  the Appendix. The statement (\textit{ii}) follows from (\ref{AsCovProbOneFloret}) and  from the equation 
$$\left(diag(\tilde{\boldsymbol{\theta}}_0^{-1}) + \boldsymbol 1 \cdot \boldsymbol 1\tr \theta_{J0}^{-1}\right)^{-1} = 
diag(\tilde{\boldsymbol \theta}_0)  - \boldsymbol 1 \cdot \boldsymbol 1\tr  \theta_{J0}.
$$
\end{proof}

\vspace{3mm}
Notice that the matrix $\boldsymbol{\Sigma}_{\tilde{\boldsymbol{\theta}}} = diag(\tilde{\boldsymbol{\theta}}_0) - \boldsymbol 1 \cdot \boldsymbol 1\tr \theta_{J0}$ that appears in (\ref{AsCovMultiOneFloret}) is the covariance matrix of ${\tilde{\boldsymbol{\theta}}}_0$.
Then,  $\boldsymbol{\Phi}_{\tilde{\boldsymbol{\theta}}}$ can be rewritten as a product:
$$\boldsymbol{\Phi}_{\tilde{\boldsymbol{\theta}}} = \frac{N}{ {\mathcal{S}(Np(\boldsymbol{\theta}_0))}} \cdot \boldsymbol{\Sigma}_{\tilde{\boldsymbol{\theta}}}.$$ 
Here, the factor $\frac{N}{ {\mathcal{S}(Np(\boldsymbol{\theta}_0))}} $ is the reciprocal of the ratio between the true value of the exposure size, ${\mathcal{S}(Np(\boldsymbol{\theta}_0))}$, and the sample size, $N$.  In the remainder of the paper this ratio will be referred to as \textit{exposure rate}.

\begin{repexample} {ExHWEq.1}\textbf{(revisited)}
The subscript $0$ indicating a true parameter value is omitted further for brevity. After expressing the model  (\ref{modelHWeq}) in terms of $\theta_1$ only, 
$$p_{1}= \theta_1^2, \,\,p_{2}=\theta_1(1-\theta_1), \,\,p_{3}=(1-\theta_1)\theta_1, \,\,p_{4}=(1 - \theta_1)^2,$$
one obtains
\begin{equation} \label{AMTwoFlorets}
\mathbf{A} = diag(\boldsymbol{p})^{-1/2}\frac{\partial \boldsymbol p}{\partial \theta_1} = \left(\begin{array}{c} 2\theta_1(\theta_1)^{-1}\\
                                 (1- 2\theta_1)(\theta_1(1-\theta_1))^{-1/2}\\
                                 (1- 2\theta_1)(\theta_1(1-\theta_1))^{-1/2}\\
                                 -2(1-\theta_1)(1-\theta_1)^{-1}\end{array}\right),  \quad \mathbf{A}\tr \mathbf{A}  = \frac{2}{\theta_1(1-\theta_1)}.
                                 \end{equation}
Thus $\hat{\boldsymbol{\Phi}}_{\tilde{\theta}_1} = {\hat{\theta}_{1}(1-\hat{\theta}_{1})}/{2}$. Both the true exposure rate and its expected value are equal to $2$, which is quite logical because the floret treatment is applied  to each subject two times. \qed
\end{repexample} 


\begin{repexample} {ExCalvesDef}\textbf{(revisited)}
It is straightforward to show that
$\boldsymbol{\Phi}_{\tilde{\theta}_1} $ \,$= {\theta_1(1-\theta_1)}/{(\theta _1+ 1)}.$ The exposure rate is equal to $\theta_1 + 1$, meaning that,  given a sample of size $N$,  the floret treatment is expected to be applied $N(\theta_1 + 1)$ times. In this case, the expected exposure rate is unknown in advance and depends on the observed data.
 \qed 
\end{repexample}

\section{Multi-floret sequential designs}\label{SectionMultiple}

Consider an experimental design of tree structure as described at the beginning of Section \ref{sectionOneTrt}.   The tree nodes $X_1, \dots, X_K$ correspond to the experiments or interventions whose outcomes are random variables parameterized by strictly positive $\boldsymbol \theta^{(1)} $, $\dots$, $\boldsymbol \theta^{(K)}$, where $\boldsymbol \theta^{(k)} =  (\theta_{kl})_{l = 1}^{I_k}$, $k = 1, \dots, K$. This section focuses on testing a hypothesis stating that some of the responses to some interventions are identically distributed, that is, there exists a partition $\mathcal{F} = \{S_1, \dots, S_F \}$ of the nodes $\{X_1, \dots, X_K \}$ into subsets $S_f$, such that  for any $s, t \in \{1, \dots, K\}, \, s\neq t$, and $f = 1, \dots, F$,

\begin{align*}
&X_s, X_t \in S_f \iff \boldsymbol \theta^{(s)} = \boldsymbol \theta^{(t)}. 
\end{align*}
Recall that the subsets of nodes with the same edge probabilities are called \textit{florets}. Denote by $\boldsymbol \theta_f$ the common edge parameters of the distributions in $S_f$ and then $\boldsymbol \theta = (\boldsymbol \theta_f)_{S_f \in \mathcal{F}}$. Let $J_f =|\boldsymbol \theta_f|$ and $J =|\boldsymbol \theta|$ stand for the lengths of $\boldsymbol \theta_f$ and $\boldsymbol \theta$, respectively. Further, from the root $X_1$ there leads a unique path to a leaf $i$, where $i = 1, \dots, I$.  Let $\mu_{fji}$ denote the number of times the edge parameter $\theta_{fj}$ appears on this path.  Then $\mathbf{M}_f = (\mu_{fji})$ is $J_f \times I$  matrix. Finally, let $\mathbf{M}$ be the $J \times I$ block matrix with blocks $\mathbf{M}_f$. In the sequel, to simplify the notation, a floret $S_f$ is referred to by its index, $f$.  A floret $f$ is called complete if the row space of $\mathbf{M}_f$ contains a row of $1$'s.

\begin{definition} The multi-floret tree model $\mathcal{M}$ generated by a set of florets $\mathcal{F} = \{S_1, \dots, S_F \}$ is the set of positive distributions on $\mathcal{I}$ that satisfy: 
\begin{equation} \label{PMmatr}
\mathcal{M} = \left \{ \boldsymbol p \in \mathcal{P}:  \,\,{\boldsymbol p} = \prod_{f \in \mathcal{F}} ({\boldsymbol \theta}_f)^{\mathbf{M}_f\tr} \, \mbox{for some } \, \boldsymbol \theta = (\boldsymbol \theta_f)_{f \in \mathcal{F}} \right\}.
\end{equation} 
\end{definition}

 Let $\boldsymbol y$  be the observed data obtained from a sample of size $N$ using a generating procedure of a tree structure with nodes $\boldsymbol{X}_1, \dots, \boldsymbol{X}_K$.   As noted in the previous section,   $\boldsymbol y$ is a realization of a random variable $\boldsymbol Y$ which has a multinomial distribution $Mult(N, \boldsymbol p)$ where $\boldsymbol p$ is the vector of leaf probabilities.   Notice that that under the model (\ref{PMmatr}), the kernel of the multinomial log-likelihood function ${L}(\boldsymbol p(\boldsymbol{\theta}), \boldsymbol y) $ can  be partitioned into floret-specific terms, namely:
\begin{equation}\label{LLmultiF}
{L}(\boldsymbol p(\boldsymbol{\theta}), \boldsymbol y) = \boldsymbol y \tr \mbox{log } \boldsymbol p(\boldsymbol{\theta}) = \sum_{f \in \mathcal{F} }(\mathbf{M}_f\boldsymbol y)\tr \cdot  \mbox{log } \boldsymbol \theta_f =  \sum_{f \in \mathcal{F} } {L}_f(\boldsymbol p(\boldsymbol{\theta}), \boldsymbol y).
\end{equation} 
(The logarithm is applied component-wise.)  Here, each  ${L}_f(\boldsymbol p(\boldsymbol{\theta}), \boldsymbol y)  = \mathbf{M}_f\boldsymbol y \tr \mbox{log } \boldsymbol \theta_f$ is the kernel of the floret-specific multinomial log-likelihood function.
Because the floret parameters $\boldsymbol \theta_f$ mutually variation independent, Theorem 4.6 of \cite{TRbook2018} implies that the MLE   of $\boldsymbol{\theta} = (\boldsymbol \theta_f)_{f \in \mathcal{F}}$ can be found by separately maximizing the floret-specific log-likelihood functions.  
Consequently, the MLE $\hat{\boldsymbol{\theta}}$ and its properties can be obtained by extending Theorems \ref{theoremMatrixOneFloret}  to the multi-floret case, as shown next. In the sequel, $\mathcal{S}_f({\boldsymbol{y}}) = \boldsymbol 1_f\tr \mathbf{M}_f \boldsymbol y$ is the exposure size in the floret $f$. Here $\boldsymbol 1_f$ is the vector of $1's$ of length $J_f$.

\begin{theorem}\label{theoremMatrixMultiFloret}
Let $\mathcal{M}$ be a multi-floret tree model (\ref{PMmatr}).   Assume that  the MLEs $\hat{\boldsymbol \theta}$ and $\hat{\boldsymbol p}$   given $\boldsymbol y$  exist.  Then, the following holds:
\begin{enumerate}[(i)]
\item The MLE of the edge probabilities and leaf probabilities are equal to, respectively:
\begin{equation}\label{MLEclosed1}
\hat{\boldsymbol \theta}_{f} =  \frac{1}{\mathcal{S}_f({\boldsymbol{y}})} \cdot \mathbf{M}_f \boldsymbol y, \, \mbox{ for each } f \in \mathcal{F},  \quad \mbox{and} \quad \hat{\boldsymbol p} = \prod_{f \in \mathcal{F}} (\hat{\boldsymbol \theta}_f)^{\mathbf{M}_f\tr}.
\end{equation}
\item For each $f = 1 \dots, F$ and $\hat{\boldsymbol y} = N\hat{\boldsymbol p}$,
\begin{equation}\label{adjFmultiFloretJ}
 \frac{1}{\mathcal{S}_f(\hat{\boldsymbol{y}})} \cdot \mathbf{M}_f \hat{\boldsymbol y} =  \frac{1}{\mathcal{S}_f({\boldsymbol{y}})} \cdot \mathbf{M}_f {\boldsymbol y} .
\end{equation}
\item For each $f \in \mathcal{F}$, the floret exposure ratio,
\begin{equation}\label{adjMultiFloretGamma} {\gamma}_f = 
\frac{\mathcal{S}_f(\hat{\boldsymbol y})}{\mathcal{S}_f({\boldsymbol{y}})},
\end{equation}
is equal to $1$  if and only if the row space of $\mathbf{M}_f$ contains a row of $1$'s. 
\end{enumerate}
\end{theorem}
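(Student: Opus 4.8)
The plan is to reduce Theorem~\ref{theoremMatrixMultiFloret} to the one-floret case by exploiting the variation independence of the floret parameters, which was already used above to split the log-likelihood (\ref{LLmultiF}) into a sum of floret-specific kernels $L_f(\boldsymbol p(\boldsymbol\theta),\boldsymbol y) = (\mathbf{M}_f\boldsymbol y)\tr\,\mbox{log }\boldsymbol\theta_f$. Since each $L_f$ depends only on $\boldsymbol\theta_f$ and the constraint $\boldsymbol 1_f\tr\boldsymbol\theta_f = 1$ couples only the coordinates within floret $f$, maximizing $L$ is equivalent to maximizing each $L_f$ separately, and each such subproblem is \emph{literally} the one-floret optimization problem of Theorem~\ref{theoremMatrixOneFloret} with design matrix $\mathbf{M}_f$, data $\boldsymbol y$, and sample size $N$. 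So part~(i) follows at once from Theorem~\ref{theoremMatrixOneFloret}(i) applied floret by floret: $\hat{\boldsymbol\theta}_f = \mathbf{M}_f\boldsymbol y/\mathcal{S}_f(\boldsymbol y)$, and then $\hat{\boldsymbol p} = \prod_{f\in\mathcal{F}}(\hat{\boldsymbol\theta}_f)^{\mathbf{M}_f\tr}$ is just the model parameterization (\ref{PMmatr}) evaluated at the MLE.

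For parts~(ii) and~(iii) I would invoke the generalized mean-value theorem of \cite{KRD11} (Theorem~3.3 there), exactly as in the proof of Theorem~\ref{theoremMatrixOneFloret}(ii,iii), but now noting that the sufficient statistics of the multi-floret exponential family are the concatenation of the floret-specific statistics $\mathbf{M}_f\boldsymbol y$. The subtlety is that a single global constant $\gamma$ need not exist across florets; instead the mean-value theorem, applied to the model $\mathcal{M}$ as an exponential family, yields the proportionality of estimated to observed sufficient statistics, and because the statistics block-decompose by floret while the normalization ($\boldsymbol 1_f\tr\boldsymbol\theta_f = 1$ within each $f$) also decomposes, the proportionality constant is constant \emph{within} each floret but may differ \emph{between} florets — giving the floret-specific $\gamma_f$. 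Concretely: for each $f$ there is $\gamma_f > 0$ with $\mathbf{M}_f\hat{\boldsymbol p} = (\gamma_f/N)\,\mathbf{M}_f\boldsymbol y$; multiplying by $N$ and dividing by $\mathcal{S}_f$ of each side gives (\ref{adjFmultiFloretJ}), and summing the coordinates of $\mathbf{M}_f\hat{\boldsymbol y}$ identifies $\gamma_f = \mathcal{S}_f(\hat{\boldsymbol y})/\mathcal{S}_f(\boldsymbol y)$, which is (\ref{adjMultiFloretGamma}). Alternatively — and perhaps cleaner — one can derive this directly from part~(i): $\mathbf{M}_f\hat{\boldsymbol y} = N\mathbf{M}_f\hat{\boldsymbol p}$, and since within floret $f$ the model restricted to its own parameters is one-floret, Theorem~\ref{theoremMatrixOneFloret}(ii) already gives $\mathbf{M}_f\hat{\boldsymbol y}/\mathcal{S}_f(\hat{\boldsymbol y}) = \mathbf{M}_f\boldsymbol y/\mathcal{S}_f(\boldsymbol y)$.

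For part~(iii)'s characterization of when $\gamma_f = 1$, the argument is again the floret-wise version of Theorem~\ref{theoremMatrixOneFloret}(iv): applying Theorem~3.3 of \cite{KRD11} to floret $f$, the condition $\gamma_f = 1$, i.e. $\mathcal{S}_f(\hat{\boldsymbol y}) = \mathcal{S}_f(\boldsymbol y)$, holds if and only if $\boldsymbol 1_f\tr \in \mathrm{rowspace}(\mathbf{M}_f)$ — precisely the definition of floret $f$ being complete given just before the statement. The main obstacle, and the one point that genuinely needs care rather than a pointer to the one-floret theorem, is justifying that the maximization truly decouples when the model is the \emph{product} form (\ref{PMmatr}) and the leaves are shared across florets: one must check that $\partial L/\partial\boldsymbol\theta_f$ involves no $\boldsymbol\theta_g$ for $g\neq f$ (immediate from (\ref{LLmultiF})) and that the simplex constraints are separable across florets (they are, since $\boldsymbol\theta = (\boldsymbol\theta_f)_{f\in\mathcal{F}}$ with each $\boldsymbol\theta_f$ constrained only by $\boldsymbol 1_f\tr\boldsymbol\theta_f = 1$), so the Lagrangian splits into $F$ independent Lagrangians of the form (\ref{LLoneF1}). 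Once that separation is recorded, everything reduces termwise to Theorem~\ref{theoremMatrixOneFloret}, and the proof is short; I would relegate the routine Lagrange-multiplier verification to the Appendix and keep the main text at the level of "apply Theorem~\ref{theoremMatrixOneFloret} to each floret."
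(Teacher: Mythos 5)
Your proposal is correct and follows essentially the same route as the paper: the paper also decomposes the log-likelihood as in (\ref{LLmultiF}), invokes the variation independence of the floret parameters to separate the maximization, and then applies Theorem \ref{theoremMatrixOneFloret} parts (i), (ii), and (iv) floret-wise to obtain (i)--(iii). The only difference is one of exposition --- you spell out the separability of the Lagrangian and the floret-specific mean-value argument that the paper compresses into a citation and a one-line pointer.
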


\begin{proof}

For each  $f \in \mathcal{F}$, if the MLE  $\hat{\boldsymbol \theta}_f$ exists, then by Theorem \ref{theoremMatrixOneFloret},  it is equal to
\begin{equation*}
\hat{\boldsymbol \theta}_{f} =   \frac{1}{\mathcal{S}_f({\boldsymbol{y}})} \cdot \mathbf{M}_f \boldsymbol y, 
\end{equation*}
and therefore, $\hat{\boldsymbol p} = \prod_{f \in \mathcal{F}} (\hat{\boldsymbol \theta}_f)^{\mathbf{M}_f\tr}$.  The statement (ii) follows from (ii) of Theorem \ref{theoremMatrixOneFloret} and, finally, (iii) follows from the same theorem, part (iv),  applied floret-wise.

\end{proof}

\vspace{5mm}

Notice that if the row space of a floret submatrix $\mathbf{M}_f$ contains a row of $1$'s then the corresponding floret has the overall effect (OE), which happens, for example, when the floret is complete.  A multi-floret model can include both florets with and without the OE.

The asymptotic results for the MLE of $\boldsymbol{\theta}$  and of $\boldsymbol{p}$ in a multi-floret case are presented next.   Firstly, because within each floret $\theta_{f,J_f} =  1 - \theta_{f1} - \dots - \theta_{f,J_f-1}$, the multi-floret model can be expressed  in terms of $J-f$ non-redundant parameters, $\tilde{\boldsymbol \theta} = (\tilde{\boldsymbol \theta}_f)_{f \in \mathcal{F}}$, where $\tilde{\boldsymbol \theta}_f \coloneqq(\theta_{f1},  \dots, \theta_{f,J_f-1})\tr$.   For each $f = 1, \dots, F$,  $\boldsymbol{\theta}_{f0} = (\tilde{\boldsymbol \theta}_{f0}, {\theta}_{f,J_f,0})\tr$ denote the corresponding true values of the floret-specific components.  Similarly to the previous section, let $(\partial \boldsymbol p / \, \partial \boldsymbol \theta_0)$ denote the value of the matrix of first derivatives $(\partial \boldsymbol p /\, \partial \tilde{\boldsymbol \theta})$ of $\boldsymbol p$ with respect  to $\tilde{\boldsymbol \theta}$ at $\boldsymbol \theta_0$, and the matrix $\mathbf{A} = diag(\boldsymbol p_0)^{-1/2}(\partial \boldsymbol p /\, \partial \boldsymbol \theta_0)$.

\begin{theorem}\label{theoremAgrestiAsymptoticMulti}
Let $\mathcal{M}$ be a multi-floret model with a set of florets $\mathcal{F}$ and assume that the true parameter value $\boldsymbol p_0  = \boldsymbol p(\boldsymbol \theta_0) \in \mathcal{M}$.  Assume that given data $\boldsymbol y \sim Mult(N, \boldsymbol p_0)$ the MLE $\hat{\tilde{\boldsymbol \theta}}$ and $\hat{\boldsymbol p} =  \boldsymbol p (\hat{\boldsymbol \theta})$ exist.  As $N \to \infty$,  the  asymptotic distributions of $\hat{\tilde{\boldsymbol \theta}}$ and $\hat{\boldsymbol p}$ are, respectively: 
\begin{align}\label{systemPnewMatrix_estimate_M1}  
&\sqrt{N}(\hat{\tilde{\boldsymbol \theta}} - \tilde{\boldsymbol \theta}_0) \overset{d}{\rightarrow} \mathcal{N}(\boldsymbol 0, \boldsymbol{\Phi}_{{\tilde{\boldsymbol{\theta}}}}); \\
&\sqrt{N}(\hat{\boldsymbol p} - \boldsymbol p_0) \overset{d}{\rightarrow} \mathcal{N}(\boldsymbol 0, \boldsymbol{\Phi}_{{{\boldsymbol{p}}}}), \nonumber
\end{align}
\noindent where $\overset{d}{\rightarrow}$ denotes convergence in distribution, and  
\begin{equation}\label{AsCovProbMultiFloret}
\boldsymbol{\Phi}_{{\tilde{\boldsymbol{\theta}}}} =(\mathbf{A}\tr \mathbf{A})^{-1}\quad \mbox{ and } \quad \boldsymbol{\Phi}_{{\boldsymbol{p}}} =\frac{\partial \boldsymbol p}{\partial \boldsymbol \theta_0}(\mathbf{A}\tr \mathbf{A})^{-1}\frac{\partial \boldsymbol p\tr}{\partial \boldsymbol \theta_0}.
\end{equation}
\end{theorem}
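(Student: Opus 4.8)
The plan is to derive both convergence statements from the classical large-sample theory of maximum likelihood estimation in a smoothly parameterized multinomial model, exactly as in the one-floret Theorem \ref{theoremAgrestiAsymptotic}, and then to read off the displayed covariance matrices from the floret decomposition. First I would check that the reduced map $\tilde{\boldsymbol\theta}\mapsto\boldsymbol p(\tilde{\boldsymbol\theta})=\prod_{f\in\mathcal F}(\boldsymbol\theta_f)^{\mathbf M_f\tr}$ satisfies the regularity hypotheses of \cite{Birch64proof} (see also Theorem 14.8-4 of \cite{BFH}) on the open region $\boldsymbol\Omega=\prod_{f\in\mathcal F}\{\tilde{\boldsymbol\theta}_f\in(0,1)^{J_f-1}:\boldsymbol 1\tr\tilde{\boldsymbol\theta}_f<1\}$: the map is a product of monomials, hence $C^\infty$ on $\boldsymbol\Omega$; the true point $\tilde{\boldsymbol\theta}_0$ is interior; $\boldsymbol p_0=\boldsymbol p(\tilde{\boldsymbol\theta}_0)>0$ because $\boldsymbol p_0\in\mathcal M$; and the Jacobian $\partial\boldsymbol p/\partial\boldsymbol\theta_0$ has full column rank $J-F$, which then also yields the local injectivity near $\tilde{\boldsymbol\theta}_0$ via the inverse function theorem. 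The full-rank condition is the only nontrivial one, and I address it below. Granting it, Birch's theorem gives the consistency of $\hat{\tilde{\boldsymbol\theta}}$ together with $\sqrt N(\hat{\tilde{\boldsymbol\theta}}-\tilde{\boldsymbol\theta}_0)\overset{d}{\rightarrow}\mathcal N(\boldsymbol 0,(\mathbf A\tr\mathbf A)^{-1})$, since $\mathbf A\tr\mathbf A=(\partial\boldsymbol p/\partial\boldsymbol\theta_0)\tr\, diag(\boldsymbol p_0)^{-1}\,(\partial\boldsymbol p/\partial\boldsymbol\theta_0)$ is exactly the per-observation Fisher information in the $\tilde{\boldsymbol\theta}$-coordinates.

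To establish the full rank (equivalently, positive definiteness of $\mathbf A\tr\mathbf A$), I would use the floret decomposition (\ref{LLmultiF}) of the log-likelihood into terms $L_f$ that depend only on $\boldsymbol\theta_f$, hence only on $\tilde{\boldsymbol\theta}_f$. This makes every mixed second derivative $\partial^2 L/\partial\theta_{fj}\partial\theta_{gk}$ with $f\neq g$ vanish identically, so $\mathbf A\tr\mathbf A$ is block-diagonal with one block per floret. Carrying out the computation behind Theorem \ref{theoremMatrixProbOneFloret}(i) one floret at a time, the $f$-th block equals $\mathcal S_f(\boldsymbol p(\boldsymbol\theta_0))\,(diag(\tilde{\boldsymbol\theta}_{f0}^{-1})+\boldsymbol 1\cdot\boldsymbol 1\tr\theta_{f,J_f,0}^{-1})$, which is positive definite since the coordinates of $\tilde{\boldsymbol\theta}_{f0}$ lie in $(0,1)$ and $\mathcal S_f(\boldsymbol p(\boldsymbol\theta_0))=\boldsymbol 1_f\tr\mathbf M_f\boldsymbol p(\boldsymbol\theta_0)>0$. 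Hence $\mathbf A\tr\mathbf A$ is positive definite, the Jacobian has full column rank, $(\mathbf A\tr\mathbf A)^{-1}$ exists and is block-diagonal with the corresponding per-floret inverses, and the first limit in (\ref{systemPnewMatrix_estimate_M1}) with $\boldsymbol\Phi_{\tilde{\boldsymbol\theta}}=(\mathbf A\tr\mathbf A)^{-1}$ follows.

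For the second limit I would invoke the delta method. Since $\hat{\boldsymbol p}=\boldsymbol p(\hat{\tilde{\boldsymbol\theta}})$, $\boldsymbol p(\cdot)$ is differentiable at $\tilde{\boldsymbol\theta}_0$, and $\hat{\tilde{\boldsymbol\theta}}\overset{p}{\rightarrow}\tilde{\boldsymbol\theta}_0$, a first-order expansion gives $\sqrt N(\hat{\boldsymbol p}-\boldsymbol p_0)=\frac{\partial\boldsymbol p}{\partial\boldsymbol\theta_0}\sqrt N(\hat{\tilde{\boldsymbol\theta}}-\tilde{\boldsymbol\theta}_0)+o_P(1)$, and therefore $\sqrt N(\hat{\boldsymbol p}-\boldsymbol p_0)\overset{d}{\rightarrow}\mathcal N\big(\boldsymbol 0,\frac{\partial\boldsymbol p}{\partial\boldsymbol\theta_0}(\mathbf A\tr\mathbf A)^{-1}\frac{\partial\boldsymbol p\tr}{\partial\boldsymbol\theta_0}\big)$, which is $\boldsymbol\Phi_{\boldsymbol p}$ as in (\ref{AsCovProbMultiFloret}). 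Everything other than the block-structure argument is inherited verbatim from the one-floret case, so the step that deserves real care — and the main obstacle I anticipate — is making the full-rank claim for $\partial\boldsymbol p/\partial\boldsymbol\theta_0$ airtight: one must confirm that the floret-wise reduction genuinely reproduces the positive-definite blocks of Theorem \ref{theoremMatrixProbOneFloret}(i), even though the sum defining each block now runs over all $I$ leaves of the whole tree rather than over the leaves of a single-floret subtree. I expect this to go through because the exponents attached to a fixed floret $f$ are collected in $\mathbf M_f$ regardless of the other florets, but it is the point where the Appendix proof of Theorem \ref{theoremMatrixProbOneFloret}(i) must be re-examined rather than merely cited.
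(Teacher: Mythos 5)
Your proposal follows the same route the paper takes: the paper simply states that the result follows from the asymptotic theory of \cite{Birch64proof} and Theorem 14.8-4 of \cite{BFH} and omits the proof, and your argument is a correct filling-in of the regularity hypotheses of that theory (smoothness, interiority, positivity, full column rank of the Jacobian) plus the standard delta-method step for $\hat{\boldsymbol p}$. The one point you flag as needing care --- that the floret-wise computation of the blocks of $\mathbf{A}\tr\mathbf{A}$ goes through even though the sums run over all $I$ leaves --- is exactly what the paper establishes separately as Theorem \ref{theoremMatrixProbMultiFloret}(i) in the Appendix, so your positive-definiteness argument is consistent with the paper's own development.
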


The statements (\ref{systemPnewMatrix_estimate_M1}) and (\ref{AsCovProbMultiFloret}) follow from the asymptotic theory for general parametric models for discrete data \citep{Birch64proof, BFH}. The proof is omitted.  As in the one-floret case,  a simpler expression for $\boldsymbol{\Phi}_{\hat{\boldsymbol{\theta}}}$  can be derived. 

\begin{theorem}\label{theoremMatrixProbMultiFloret}
For a multi-floret model $\mathcal{M}$,
\begin{enumerate}[(i)]
\item The matrix $\mathbf{A}\tr \mathbf{A} $ is block-diagonal, constructed from floret-specific blocks:
\begin{equation}\label{MultiFloretMxA}
\mathbf{A}\tr \mathbf{A} = diag\left\{\mathcal{S}_f(\boldsymbol p(\boldsymbol{\theta}_0))\cdot( diag(\tilde{\boldsymbol \theta}_{f0}^{-1})  + \boldsymbol 1 \cdot \boldsymbol 1\tr  \theta_{f,J_f0}^{-1})\right \}_{f = 1}^{F},
\end{equation}
where for each $f = 1, \dots, F$,  $\mathcal{S}_f(\boldsymbol p(\boldsymbol{\theta}_0)) =  \boldsymbol{1} \tr  \mathbf{M} \boldsymbol{p}(\boldsymbol{\theta}_0)$ is the exposure size at block $f$.
\item The asymptotic covariance matrix  $\boldsymbol{\Phi}_{\tilde{\boldsymbol{\theta}}}$ of $\sqrt{N}\hat{\tilde{\boldsymbol{\theta}}}$ is  block-diagonal, with blocks $\boldsymbol{\Phi}_{\tilde{\boldsymbol{\theta}}_{f}}$,  where for each for $f = 1, \dots, F$, 
\begin{equation}\label{AsCovMultiFloret}
\boldsymbol{\Phi}_{\tilde{\boldsymbol{\theta}}_f} =\frac{1}{ {\mathcal{S}_f(p(\boldsymbol{\theta}_0))}} \cdot (diag(\tilde{\boldsymbol \theta}_{f0})  - \boldsymbol 1 \cdot \boldsymbol 1\tr  \theta_{f,J_f,0}). 
\end{equation}
\end{enumerate}
\end{theorem}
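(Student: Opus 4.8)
The plan is to derive both parts from the floret-separable structure of the parameterization $\boldsymbol\theta\mapsto\boldsymbol p(\boldsymbol\theta)$, reducing the floret-wise computation to the one already carried out for Theorem \ref{theoremMatrixProbOneFloret}. First I would note that, after substituting $\theta_{f,J_f} = 1 - \theta_{f1} - \dots - \theta_{f,J_f-1}$ within each floret, the logarithm of a leaf probability splits additively over florets --- the single-observation form of the likelihood decomposition (\ref{LLmultiF}):
\begin{equation*}
\mbox{log }p_i(\boldsymbol\theta) = \sum_{f \in \mathcal{F}} \sum_{j=1}^{J_f} \mu_{fji}\,\mbox{log }\theta_{fj}.
\end{equation*}
Hence the mixed second derivative $\partial^2 \mbox{log }p_i / (\partial\theta_{fj}\,\partial\theta_{gk})$ is identically zero when $f \neq g$, so the Hessian of $\mbox{log }p_i$ with respect to $\tilde{\boldsymbol\theta}$ is block-diagonal in the florets. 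Since $\mathbf{A}\tr\mathbf{A}$, evaluated at $\boldsymbol\theta_0$, is the per-observation Fisher information of the multinomial model with cell probabilities $\boldsymbol p(\tilde{\boldsymbol\theta})$, the information identity expresses it as minus the expectation under $\boldsymbol p(\boldsymbol\theta_0)$ of that Hessian; block-diagonality is therefore inherited, which disposes of the off-diagonal entries in (\ref{MultiFloretMxA}).

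For the $f$-th diagonal block I would read off the remaining entries from the same expansion: differentiating twice, $\partial^2 \mbox{log }p_i / (\partial\theta_{fj}\,\partial\theta_{fk}) = -\,\delta_{jk}\,\mu_{fji}\,\theta_{fj}^{-2} - \mu_{f,J_f,i}\,\theta_{f,J_f}^{-2}$, so the $(fj,fk)$ entry of $\mathbf{A}\tr\mathbf{A}$ equals $\delta_{jk}\,\theta_{fj0}^{-2}\sum_i p_i(\boldsymbol\theta_0)\mu_{fji} + \theta_{f,J_f,0}^{-2}\sum_i p_i(\boldsymbol\theta_0)\mu_{f,J_f,i}$. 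The one nontrivial ingredient is the expectation identity
\begin{equation*}
\sum_{i} p_i(\boldsymbol\theta_0)\,\mu_{fji} = \theta_{fj0}\cdot \mathcal{S}_f(\boldsymbol p(\boldsymbol\theta_0)),
\end{equation*}
which is exactly the tree-path counting step of the Appendix proof of Theorem \ref{theoremMatrixProbOneFloret}(i), applied only to the nodes of floret $f$: grouping the leaves below each floret-$f$ node $v$ gives $\sum_i p_i(\boldsymbol\theta_0)\mu_{fji} = \theta_{fj0}\sum_{v\in f}P(\mbox{reach }v)$, and $\sum_{v\in f}P(\mbox{reach }v) = \boldsymbol 1_f\tr\mathbf{M}_f\boldsymbol p(\boldsymbol\theta_0) = \mathcal{S}_f(\boldsymbol p(\boldsymbol\theta_0))$. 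Substituting gives the $(fj,fk)$ entry $\mathcal{S}_f(\boldsymbol p(\boldsymbol\theta_0))\,(\delta_{jk}\,\theta_{fj0}^{-1} + \theta_{f,J_f,0}^{-1})$, i.e. the block $\mathcal{S}_f(\boldsymbol p(\boldsymbol\theta_0))\big(diag(\tilde{\boldsymbol\theta}_{f0}^{-1}) + \boldsymbol 1\cdot\boldsymbol 1\tr\,\theta_{f,J_f,0}^{-1}\big)$, establishing (\ref{MultiFloretMxA}) and part (i).

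Part (ii) then follows mechanically: the inverse of a block-diagonal matrix is block-diagonal with the blocks inverted, and each block is inverted by the rank-one (Sherman--Morrison) identity already used in the proof of Theorem \ref{theoremMatrixProbOneFloret}, $\big(diag(\tilde{\boldsymbol\theta}_{f0}^{-1}) + \boldsymbol 1\cdot\boldsymbol 1\tr\,\theta_{f,J_f,0}^{-1}\big)^{-1} = diag(\tilde{\boldsymbol\theta}_{f0}) - \boldsymbol 1\cdot\boldsymbol 1\tr\,\theta_{f,J_f,0}$. Combined with Theorem \ref{theoremAgrestiAsymptoticMulti} this yields $\boldsymbol{\Phi}_{\tilde{\boldsymbol\theta}} = (\mathbf{A}\tr\mathbf{A})^{-1}$ as a block-diagonal matrix with $f$-th block $\mathcal{S}_f(\boldsymbol p(\boldsymbol\theta_0))^{-1}\big(diag(\tilde{\boldsymbol\theta}_{f0}) - \boldsymbol 1\cdot\boldsymbol 1\tr\,\theta_{f,J_f,0}\big)$, which is (\ref{AsCovMultiFloret}). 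I expect the main obstacle to be the expectation identity above: it is the only place where the combinatorics of a possibly path-dependent tree enters, and it must be arranged so the scalar factor comes out as $\mathcal{S}_f$ evaluated at the \emph{full} leaf probabilities $\boldsymbol p(\boldsymbol\theta_0)$, not at a floret-restricted distribution; reusing the one-floret Appendix computation node by node within floret $f$ is how I would handle it, the rest being routine block bookkeeping.
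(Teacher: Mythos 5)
Your argument for part (i) is correct but takes a genuinely different route from the paper's. The paper proves (i) in the Appendix by introducing $Q_p=\sum_i p_i\log\hat p_i$ and the floret sums $W_{fz}=\sum_j z_{fj}\log z_{fj}-z_{f,J_f+1}\log z_{f,J_f+1}$, establishing $Q_p=W_z$ on the model via the explicit MLE formula and Proposition 2.5 of \cite{Kapranov1991}, and then equating second derivatives of the two sides to isolate $\frac{\partial \boldsymbol p\tr}{\partial \boldsymbol{\theta}} diag(\boldsymbol p^{-1 }) \frac{\partial \boldsymbol p}{\partial \boldsymbol{\theta}}$. You instead exploit the additive separability of $\log p_i$ over florets to get block-diagonality of the Hessian directly, and then use the second Bartlett identity $\mathbf{A}\tr\mathbf{A}=-\sum_i p_i\,\partial^2\log p_i/\partial\tilde{\boldsymbol\theta}^2$ (valid here because $\sum_i p_i(\tilde{\boldsymbol\theta})\equiv 1$ on the tree parameterization) together with the population fixed-point relation $\mathbf{M}_f\boldsymbol p(\boldsymbol\theta_0)=\boldsymbol\theta_{f0}\,\mathcal{S}_f(\boldsymbol p(\boldsymbol\theta_0))$, which you justify correctly by grouping leaves below the floret-$f$ nodes and which is the same relation $\boldsymbol\theta=\mathcal{S}(\boldsymbol p)^{-1}\mathbf{M}\boldsymbol p$ that the paper's Appendix takes as its starting point. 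Your entrywise computation of the diagonal block checks out, and your route is shorter and more elementary: it avoids the $Q_p=W_z$ device and the appeal to \cite{Kapranov1991} at the cost of invoking the information identity, which the paper's argument in effect re-derives by hand.

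For part (ii) you follow the same block-wise inversion as the paper, but the rank-one identity you display (copied from the paper's proof of Theorem \ref{theoremMatrixProbOneFloret}) is not what Sherman--Morrison actually gives. With $D=diag(\tilde{\boldsymbol\theta}_{f0})$ and $c=\theta_{f,J_f,0}^{-1}$, Sherman--Morrison yields $(D^{-1}+c\,\boldsymbol 1\boldsymbol 1\tr)^{-1}=D-\tilde{\boldsymbol\theta}_{f0}\tilde{\boldsymbol\theta}_{f0}\tr$, not $D-\boldsymbol 1\boldsymbol 1\tr\theta_{f,J_f,0}$; already for $J_f=2$ the former gives $(\theta_1^{-1}+\theta_2^{-1})^{-1}=\theta_1\theta_2=\theta_1-\theta_1^2$ while the latter gives $\theta_1-\theta_2$. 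The $\tilde{\boldsymbol\theta}\tilde{\boldsymbol\theta}\tr$ form is the one consistent with the paper's own worked examples (the revisited Examples \ref{ExHWEq.1} and \ref{ExCalvesDef} report $\theta_1(1-\theta_1)/\mathcal{S}$), and it is the genuine multinomial covariance of the reduced indicator vector; the $\boldsymbol 1\boldsymbol 1\tr\theta_{f,J_f,0}$ version appears to be a typo in the statement that both the paper's proof and your proposal reproduce. You should state the inverse correctly rather than attribute the printed formula to Sherman--Morrison; the remaining steps of (ii) --- inverting block by block and combining with Theorem \ref{theoremAgrestiAsymptoticMulti} --- are fine and coincide with the paper's.
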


\begin{proof}

The statement (\textit{i}) is verified in  the Appendix. 

To show (\textit{ii}) recall that the inverse of a non-singular block-diagonal matrix is also block-diagonal,  obtained by inverting the original blocks.  It is straightforward to check that each block of $\mathbf{A}\tr \mathbf{A}$ is non-singular,  with inverse equal to:
$$
(\mathcal{S}_f(\boldsymbol p(\boldsymbol{\theta}_0))\cdot( diag(\tilde{\boldsymbol \theta}_{f0}^{-1})  + \boldsymbol 1 \cdot \boldsymbol 1\tr  \theta_{f,J_f0}^{-1}))^{-1} = \frac{1}{ {\mathcal{S}(p(\boldsymbol{\theta}_0))}} \cdot (diag(\tilde{\boldsymbol \theta}_{f0})  - \boldsymbol 1 \cdot \boldsymbol 1\tr  \theta_{f,J_f,0}),$$
for $f = 1, \dots, F$.
\end{proof}

Similarly to the one-floret case,  each $\boldsymbol{\Phi}_{\tilde{\boldsymbol{\theta}}_f}$ in (\ref{AsCovMultiFloret}) can be written as a product 
$$\boldsymbol{\Phi}_{\tilde{\boldsymbol{\theta}}_f} = \frac{N}{ {\mathcal{S}_f(Np(\boldsymbol{\theta}_0))}} \cdot \boldsymbol{\Sigma}_{\tilde{\boldsymbol{\theta}}_f},$$ 
of the covariance matrix of ${\tilde{\boldsymbol{\theta}}}_f$,  $\boldsymbol{\Sigma}_{\tilde{\boldsymbol{\theta}}_f} = diag(\tilde{\boldsymbol{\theta}}_f) - \boldsymbol 1 \cdot \boldsymbol 1\tr \theta_{f,J_f}$,   
and the reciprocal of the floret-specific exposure rate,  $\frac{N}{ {\mathcal{S}_f(Np(\boldsymbol{\theta}_0))}} $.  

\vspace{4mm}

\noindent \textit{Comment}:
The asymptotic covariance matrix $\boldsymbol{\Phi}_{\tilde{\boldsymbol{\theta}}}$ can be derived using the  approach of  \cite{AitchSilvey60} for the maximum likelihood maximization under constraints.  Applying their method, one first arrives to the expression for the asymptotic covariance matrix of $\hat{\boldsymbol \theta} = ((\hat{\tilde{\boldsymbol \theta}}_f, \hat{\theta}_{f, J_f}))_{f \in \mathcal{F}}$, as
\begin{equation}\label{AsCovMulti0}
 \boldsymbol{\Phi}_{\boldsymbol{\theta}}  = \mathbf{B}_{\boldsymbol{\theta}_0}^{-1} - \mathbf{B}_{\boldsymbol{\theta}_0}^{-1}  \boldsymbol 1 (\boldsymbol 1\tr\mathbf{B}_{\boldsymbol{\theta}_0}^{-1} \boldsymbol 1)^{-1} \boldsymbol 1\tr\mathbf{B}_{\boldsymbol{\theta}_0}^{-1},
\end{equation}
where $\mathbf{B}_{\boldsymbol{\theta}_0}$ is a block-diagonal matrix consisting of floret-specific blocks  $ {\mathcal{S}_f(p(\boldsymbol{\theta}_{0}))} diag (\boldsymbol{\theta}_{f0}^{-1})$.
Then, a direct substitution of the expression for $\mathbf{B}_{\boldsymbol{\theta}_0}$  into (\ref{AsCovMulti0}) yields  that $\boldsymbol{\Phi}_{{\boldsymbol{\theta}}}$ is  a block-diagonal matrix, with blocks $\boldsymbol{\Phi}_{{\boldsymbol{\theta}}_{f}}$, for $f = 1, \dots, F$, 
$$\boldsymbol{\Phi}_{{\boldsymbol{\theta}}_f} = \frac{N}{ {\mathcal{S}_f(Np(\boldsymbol{\theta}_0))}} \cdot \boldsymbol{\Sigma}_{{\boldsymbol{\theta}}_f},$$ 
where $\boldsymbol{\Sigma}_{{\boldsymbol{\theta}}_f} = diag({\boldsymbol{\theta}}_f{0}) - \boldsymbol \theta_{f0} \cdot \boldsymbol \theta_{f0}\tr$ is the covariance matrix of $\boldsymbol \theta_{f0}$.
Finally,  using the definition of $\tilde{\boldsymbol{\theta}}_{f0}$ in terms of ${\boldsymbol{\theta}}_{f}$, the delta method yields the expression (\ref{AsCovMultiFloret}). \qed

\vspace{4mm}

Notice that for a multi-floret model,   the true  exposure rates and their expected values are floret-specific,  equal to the ratio between the  exposure  size (number of subjects involved in the experiment) and the number of times the floret treatment is (or expected to be) applied,  as illustrated next.

\begin{example} \label{ExTwoFlorets1}
Consider a hypothetical clinical study in which the efficacy of a therapy consisting of treatments $T_1$ and $T_2$ is investigated.  First, the treatment $T_1$ is administered to all patients.  Then, the patients who had a positive response to it, received $T_1$ for the second time.  The patients who either had a negative response to the first round of $T_1$ or  a positive response to the second round of $T_1$ are treated with $T_2$.  The data-generating procedure is sequential and under the assumption that neither the efficacy of $T_1$ nor the efficacy of $T_2$  depend on how many times $T_1$ was  received by an individual can be described using the tree in Figure \ref{TreeTwoFlorets}. Let $p_{i}$, $i = 1, \dots, 5$, denote the leaf probabilities corresponding to the five resulting therapy paths.  The non-leaf nodes are partitioned in two equivalence classes, the first one parameterized by  
$(\zeta_1,\zeta_2)$ and the second one by $(\eta_1, \eta_2)$, where each $\zeta_1,\zeta_2, \eta_1, \eta_2 \in (0,1)$ and satisfy $\zeta_1 + \zeta_2 = 1$,  $\eta_1 +\eta_2 = 1$.   The corresponding model  $\mathcal{M}$ has two florets and states that:
\begin{equation}\label{modelTwoFlorets1}
p_{1}= \zeta_1^2\eta_1, \,\,p_{2}=\zeta_1^2\eta_2, \,\,p_{3}= \zeta_1\zeta_2, \,\,p_{4}=\zeta_2\eta_1, \,\, p_{5} =\zeta_2\eta_2.
\end{equation} 
 The model matrix consists of two floret-specific blocks:
$$\mathbf{M} = \left(\begin{array}{rrrrrr}   2 & 2 & 1 & 0 & 0\\ 
                                                             0 & 0 & 1 & 1 & 1 \\
                                                             &  &  &  &  \\[-6pt]
                                                             1 & 0 & 0 & 1 & 0 \\
                                                             0 & 1 & 0 & 0 & 1 \\
                                \end{array} \right).
$$
Notice that the row space of the matrix block corresponding to the second floret does not contain the row $(1,1,1,1,1)$, so the floret is incomplete and the exposure ratio in this floret depends on the observed data. 
By Theorem \ref{theoremMatrixProbMultiFloret}, the asymptotic covariance matrix of the non-redundant $\sqrt{N}\hat{\zeta}_1$ and $\sqrt{N}\hat{\eta}_1$  consists of two blocks and is equal to
\begin{equation} \label{MxFinal}
\boldsymbol{\Phi}_{\zeta_1, \eta_1}= 
 \left(\begin{array}{cc} \frac{\zeta_1(1-\zeta_1)}{\zeta_1 +1} & 0\\
                                 0 & \frac{\eta_1(1-\eta_1)}{\zeta_1^2 - \zeta_1 +1}\\                                
        \end{array}\right).\end{equation}
Indeed, in expectation, the $T_1$-exposure size is $N(\zeta_1 +1)$, because $N$ patients receive $T_1$ at the beginning and  another $N\zeta_1$ are expected to receive $T_1$ at the second round. And the expected $T_2$-exposure size is  $N(\zeta_1^2 + (1- \zeta_1))$, where  $N(1-\zeta_1)$ patients are expected to have had a negative response to the first round of $T_1$ and $N\zeta_1^2$ patients are expected to have had two positive responses to $T_1$.    \qed
\end{example}

\begin{figure}
\begin{center}
\includegraphics[scale=0.9]{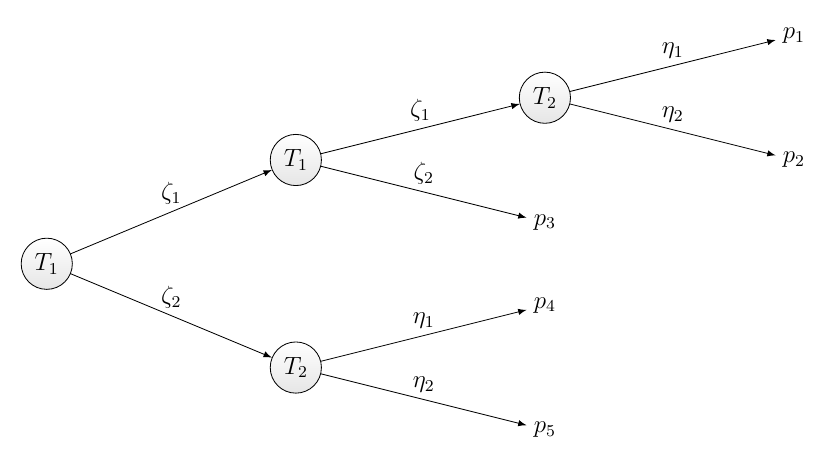}
\end{center}
\caption{Two-treatment regimen conditional on the prior response and assuming independence (no effect) from prior applications in  Example \ref{ExTwoFlorets1}.}
\label{TreeTwoFlorets}
\label{TreeTwoFlorets}
\end{figure}

\section{Examples of experimental design of an incomplete tree structure}\label{SectionExample2}

As the first illustration, consider the example of \cite{JohnsonMay1995}, describing a hypothetical investigation of a two-step treatment regimen for treating patients with a certain disease.  Three disease grades are distinguished, mild, moderate, or severe.  Initially, the treatment is administered  to all patients. After the treatment outcome is evaluated, the patients for whom no improvement was observed,  received the same treatment for the second time. 

 \cite{JohnsonMay1995} assumed the populations based on disease severity as independent and thus treated the data as a collection of three stratified triangular $2 \times 2$ contingency tables.  The approach described here proposes to treat the severity grades as different levels within the same population and thus differs from the one of  \cite{JohnsonMay1995}.  Under this assumption, the data collection procedure is viewed as sequential, leading to the experimental design shown in Figure \ref{TreeJohnson}.  The corresponding data analysis is presented next.  
 
Let $ \eta_1, \eta_2, \eta_3 \in (0,1) $ denote the probabilities of having, respectively,  mild, moderate, and severe disease; $ \eta_1 + \eta_2 + \eta_3 = 1$.  Further,  let $\zeta_1 \in (0,1) $ be probability of No improvement in a patient after the first treatment round, and $\zeta_2 =  1 - \zeta_1$. The hypothesis to be tested is that the chance of improvement is independent of the initial disease severity and of the treatment step, which can be expressed using a multiplicative model  parameterized as:
\begin{align}\label{design2}
p_{1} &=\eta_1\zeta_1^2, \,\,p_{2}=\eta_1\zeta_1 \zeta_2, \,\,p_{3}= \eta_1\zeta_2,  \nonumber \\
p_{4} &=\eta_2\zeta_1^2, \,\,p_{5}=\eta_2\zeta_1 \zeta_2, \,\,p_{6}= \eta_2\zeta_2, \\
p_{7} &=\eta_3\zeta_1^2, \,\,p_{8}=\eta_3\zeta_1 \zeta_2 \,\,p_{9}= \eta_3\zeta_2. \nonumber
\end{align} 
The model can also be expressed as $\boldsymbol p = \boldsymbol \theta^{\mathbf{M}\tr}$,  with the model matrix equal to
\begin{equation} \label{calves3}
\mathbf{M} = \left(\begin{array}{ccccccccc} 1 & 1 & 1 & 0 & 0 & 0 & 0 & 0 &0  \\ 
                                                                0 & 0 & 0 & 1 & 1 & 1 & 0 & 0 & 0  \\ 
                                                                0 & 0 & 0 & 0 & 0 & 0 & 1 & 1 & 1  \\ 
                                                                &&&&&&&&\\  [-6pt]
                                                                2 & 1 & 0 & 2 & 1 & 0 & 2 & 1 & 0  \\ 
                                                                0 & 1 & 1 & 0 & 1 & 1 & 0 & 1 & 1  \\ 
                                                               \end{array} \right),
\end{equation}
where $\boldsymbol p = (p_1, \dots, p_9)\tr$ and $\boldsymbol \theta= (\eta_1, \eta_2, \eta_3, \zeta_1, \zeta_2)\tr$.  

As implied by the study design and the hypothesis of interest, the model (\ref{design2}) is a two-floret model.  Based on the sampling scheme, one can assume the data $\boldsymbol y$ come from  a multinomial distribution $Mult(N, \boldsymbol p)$.   
By Theorem \ref{theoremMatrixMultiFloret},  the MLEs are
\begin{align}\label{design2thetaMLE}
&\hat{\eta}_1 = \frac{y_1 + y_2 + y_3}{N}, \quad \hat{\eta}_2 = \frac{y_4 + y_ 5 + y_6}{N}, \quad \hat{\eta}_3 = \frac{y_7 + y_ 8 + y_9}{N}, \\ \nonumber
&\hat{\zeta}_1 = \frac{2y_1 + y_2 + 2y_4 + y_ 5 + 2y_7 + y_ 8}{2y_1 + 2y_2 + y_3 + 2y_4 +   2y_ 5 + y_6 + 2y_7 + 2y_ 8 + y_9}, \\  
&\hat{\zeta}_2 = \frac{y_2 + y_3 +  y_ 5 + y_6 +  y_ 8 + y_9}{2y_1 + 2y_2 + y_3 + 2y_4 +   2y_ 5 + y_6 + 2y_7 + 2y_ 8 + y_9}. \nonumber
\end{align}
The asymptotic covariance matrix of the non-redundant $\sqrt{N}(\hat{\eta}_1, \hat{\eta}_2)\tr$ and $\sqrt{N}\hat{\zeta}_1$  consists of two blocks: 
\begin{equation*} 
\boldsymbol{\Phi}_{{\eta}_1,  \eta_2, {\zeta}_1}= 
 \left(\begin{array}{ccc} \frac{\eta_1(1-\eta_1)}{1} & -\eta_1\eta_2 & 0\\
                                        -\eta_2\eta_1 & \frac{\eta_2(1-\eta_2)}{1} & 0 \\
                                 0 & 0 & \frac{\zeta_1(1-\zeta_1)}{\zeta_1 +1}\\                                
        \end{array}\right).\end{equation*}
For the data in Table IV of \cite{JohnsonMay1995}, ordered according to the tree  in Figure \ref{TreeJohnson},  $\boldsymbol y = (46, 83, 176, 16, 37, 91, 6, 21, 43)\tr$,  (\ref{design2}), the MLEs are $\hat {\eta}_1 = \frac{305}{519}$, $ \hat {\eta}_2 =  \frac{144}{519}$, $ \hat {\zeta}_1 = \frac{277}{728}$,
$\hat{\boldsymbol y} = N \hat{\boldsymbol p}  \approx $ $(44.16, $ $71.89,$ $188.95,$  $20.85,$  $33.94,$  $89.21,$  $10.13,$  $16.50,$  $43.37)\tr$. The goodness-of-fit statistics are  $X^2 = 7.04$, $G^2 = 7.26$ on $df = 9 - (2 + 1) -1 = 5$ degrees of freedom,  indicating a good fit.

\begin{figure}
\begin{center}
\includegraphics[scale=0.9]{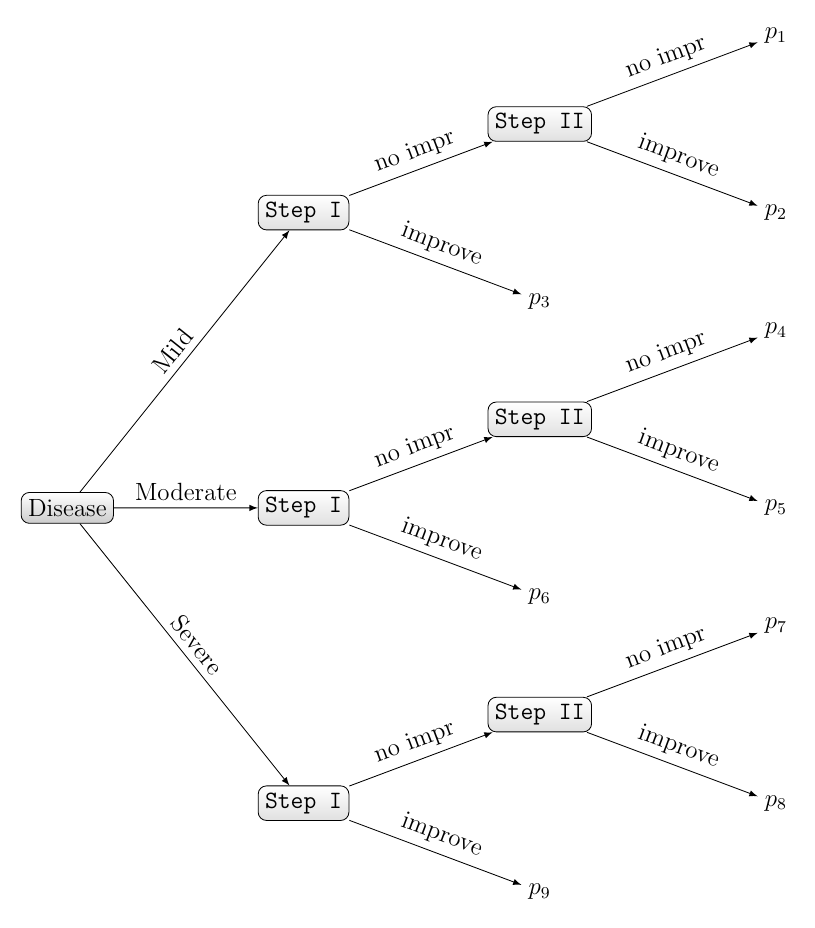}
\end{center}
\caption{A two-step regimen described in \cite{JohnsonMay1995}}
\label{TreeJohnson}
\end{figure}

As the second illustration,  the data obtained from a clinical study, called DIA-Vacc (NCT04799808), is analyzed.  The study  was initiated in January 2021 in several nephrology centers in  Germany \citep{DIAVacc}.  At the beginning the participants were given two doses either of BNT162b2mRNA or 1273-mRNA vaccine, and those whose vaccine-specific antibodies reached a certain threshold  were said to have a positive immune response to the vaccination.  After six months, the participants without a positive response to the inital vaccination were administered a \textit{booster} of the same vaccine type.  After another several months, the participants  who did not respond to the first booster,  were given another one.  Figure \ref{TreeVaccine}  visualizes the applied vaccination scheme.  One of the research goals in DIA-Vacc was to explore the possibility of existence of a delayed immune response, that is,  whether a  response to the first vaccination is independent of responses to the subsequent revaccinations.  Denote by $\theta_1, \theta_2 \in (0,1)$ the probabilities of failed and successful immune response after the first vaccination application.  Under the hypothesis of independence,   the probabilities of failure and success remain constant during each subsequent revaccination,  so $\theta_1$ and $\theta_2$ are associated to the corresponding tree edges for each node in Figure \ref{TreeFVaccine}.  The leaf probabilities $\boldsymbol p = $ ($p_1$, $p_2$, $p_3$, $p_4)\tr$ are, therefore, equal to
\begin{equation}\label{treeçonstraints}
p_{1} = \theta_1^3, \,\,  p_{2} = \theta_1^2 \theta_2, \,\, p_{3} = \theta_1\theta_2, \,\, p_{4} = \theta_2.
\end{equation}
The constraints (\ref{treeçonstraints}) specify a one-floret tree model $\boldsymbol p =  \boldsymbol \theta^{\mathbf{M}\tr}$ with the design matrix:
\begin{equation*}
\mathbf{M} = \left( 
\begin{array}{cccc}
3&2&1&0\\
0&1&1&1\\
\end{array}
\right).
\end{equation*}
As follows from Theorem \ref{theoremMatrixOneFloret}, the MLEs  of the edge parameters $\theta_1,  \theta_2$ are:
\begin{align}\label{VaccineMLE}
&\hat{\theta}_1 = \frac{3y_1 + 2y_2 + y_3}{3y_1 + 3y_2 + 2y_3 + y_4 },  \quad 
\hat{\theta}_2= \frac{y_2 + y_3 + y_4}{3y_1 + 3y_2 + 2y_3 + y_4 },
\end{align}
and,  by Theorem \ref{theoremMatrixProbOneFloret},  the asymptotic variance of $\sqrt{N}\hat{\theta}_1$ is equal to 
$$\boldsymbol{\Phi}_{{\theta}_1} = {\theta_1(1-\theta_1)}/{(\theta _1^2 + \theta _1+ 1)}.$$

To limit the disclosure, only the data of kidney transplant recipients participating in the study collected during an interim analysis after three vaccination rounds will be considered.   The set frequencies for each respective vaccination path $\boldsymbol y$ = $(80,$ $12,$ $44,$ $64)\tr$ was obtained by sampling from a multinomial distribution based on actually observed interim relative frequencies.  For these data, $\hat{\theta} = 308/428 \approx 0.72$ and 
$\hat{\boldsymbol p}$ = $((308/428)^3$, $(308^2 \cdot 120)/428^3$, $(308 \cdot 120)/428^2$, $120/428)\tr$ $\approx (0.373,$
$0.145,$ $0.202,$ $0.280)\tr$.  The Pearson statistic $X^2 \approx 11.85$  and deviance $G^2 \approx 14.65$, on two degrees of freedom, indicate evidence against the hypothesis of independence.

\begin{figure}
\begin{center}
\includegraphics[scale=0.9]{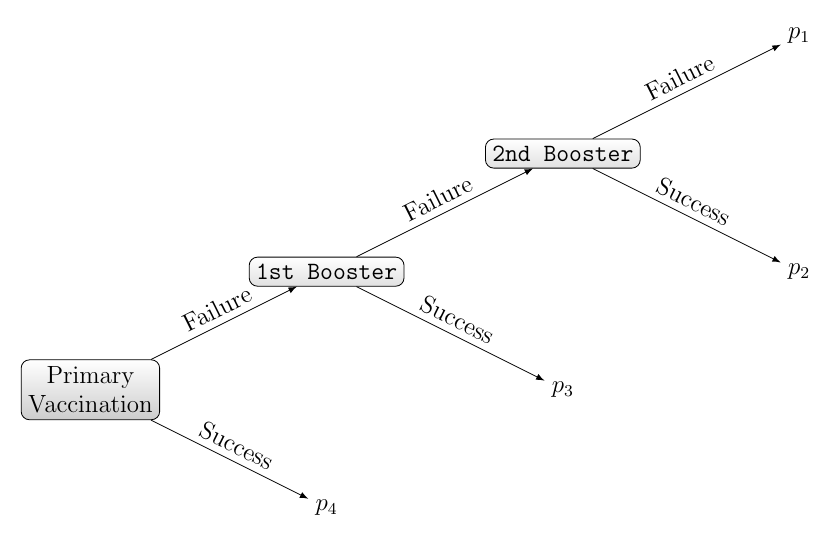}
\end{center}
\caption{Experimental design for a subsequent vaccination procedure.}
\label{TreeVaccine}
\end{figure}

\begin{figure}
\begin{center}
\includegraphics[scale=0.9]{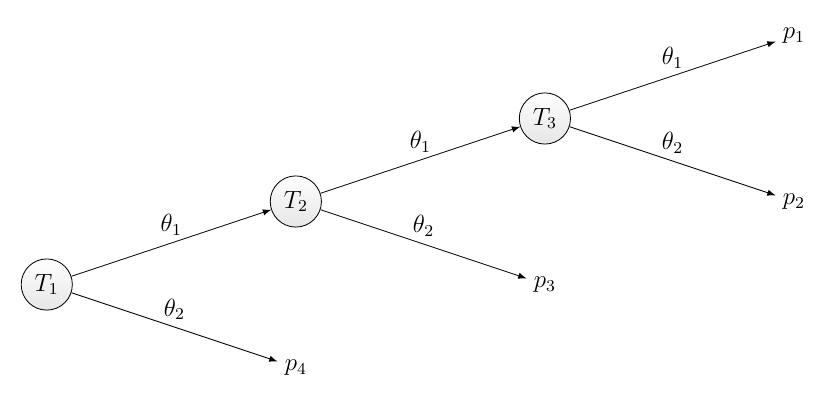}
\end{center}
\caption{Subsequent vaccination conditional on immune response under the assumption of independence.}
\label{TreeFVaccine}
\end{figure}

\section*{Acknowledgements}

The DIA-Vacc data is courtesy of Christian Hugo and Julian Stumpf, from the University Clinic Carl-Gustav Carus, Technical University Dresden. The authors are grateful to Ren\'{e} Mauer for retrieving it. 

\section*{Appendix}

\subsection{Proof of Theorem \ref{theoremMatrixProbOneFloret}}

\begin{proof}

Part $(i)$: 

For simplicity of presentation, the subscript $0$ in $\boldsymbol p_0$ and $\boldsymbol \theta_0$ is omitted. The derivatives are with respect to $\tilde{\boldsymbol \theta}$, although the notation $\boldsymbol \theta$ is used instead.

Firstly, because $\mathbf{A} = diag(\boldsymbol p)^{-1/2 } \frac{\partial \boldsymbol p}{\partial \boldsymbol{\theta}}$, one has:
$$\mathbf{A}\tr \mathbf{A} = \frac{\partial \boldsymbol p\tr}{\partial \boldsymbol{\theta}} diag(\boldsymbol p^{-1 }) \frac{\partial \boldsymbol p}{\partial \boldsymbol{\theta}}.$$
Thus, to prove (\ref{OneFloretMxA}), one needs to verify that 
\begin{equation}\label{EqProof}
\frac{\partial \boldsymbol p\tr}{\partial \boldsymbol{\theta}} diag(\boldsymbol p^{-1 }) \frac{\partial \boldsymbol p}{\partial \boldsymbol{\theta}}= {\mathcal{S}(\boldsymbol p(\boldsymbol{\theta}))}\cdot(diag(\tilde{\boldsymbol\theta}^{-1} ) + \boldsymbol 1 \cdot \boldsymbol 1\tr \theta_{J}^{-1}).
\end{equation}

The proof presented next is based on the fact that if $\boldsymbol p \in \mathcal{M}$, the derivative  $\frac{\partial \boldsymbol p}{\partial \boldsymbol{\theta}}$ can be obtained either by explicit differentiation of 
$\boldsymbol p (\boldsymbol \theta) = \boldsymbol \theta^{\mathbf{M}\tr}$, or using implicit differentiation of $\boldsymbol \theta = \frac{1}{\mathcal{S}(\boldsymbol p(\boldsymbol{\theta}))}\mathbf{M}\boldsymbol p(\boldsymbol \theta)$. By bringing the two approaches together, the result in (\ref{EqProof}) and thus in (\ref{OneFloretMxA}) will be achieved.  

Let $\boldsymbol p$ denote the observed probability distribution, and  $\hat{\boldsymbol p} = (\hat{p}_{i})_{i = 1}^I$ the MLE having observed $\boldsymbol p$ under $\mathcal{M}$. Then, because $\mathcal{M}$ is a one-floret tree model,  
\begin{equation}\label{MLEproof}
\hat{p}_i  =  \prod_{j = 1 } ^J \left(\frac{\sum_{i = 1}^I \mu_{ji} p_i}{\sum_{l = 1 }^J \sum_{i = 1}^I \mu_{li} p_i}\right)^{ \mu_{ji}}, \quad \mbox{for } \quad i = 1, \dots, I. \tag{*}
\end{equation}
Here $(\mu_{ji})_{j,i}$ are the corresponding entries of the model matrix $\mathbf{M}$. Letting 
$z_j = \sum_{i = 1}^I \mu_{ji} p_i$, for each $j = 1, \dots, J$, $\boldsymbol z = (z_1, \dots, z_J)\tr $, $z_{J + 1} = z_{1} + \dots + z_{J}$, and $\theta_j = z_{j} / z_{J + 1}$, for $j = 1, \dots, J$, one obtains that 
\begin{equation}\label{thetaproof}
\hat{p}_{i} = \prod_{j = 1}^J \left(\frac{z_{j}}{z_{J + 1}}\right)^{\mu_{ji}} = \prod_{j = 1}^J \theta_j^{\mu_{ji}},  \quad \mbox{for } i = 1, \dots, I.
\end{equation}

Define
\begin{equation}\label{Qproof1}
Q_p = \sum_{i = 1}^I p_i \log \hat{p}_i. 
\end{equation}
The function $Q_p$ has the following properties. Firstly, by Proposition 2.5 in \cite{Kapranov1991},  for functions of the form (\ref{MLEproof}) one has:
\begin{equation} \nonumber
\log {\hat{p}}_k = \frac{\partial Q_p(\boldsymbol \theta)}{\partial p_k}, \quad \mbox{for } k = 1, \dots, I.
\end{equation}
Secondly, after substituting into $Q_p$, the expression for $\hat{p}_i$ in terms of $z_j$'s from (\ref{thetaproof}) one has:
\begin{align*}
Q_p &= \sum_{i = 1}^I p_i \log \prod_{j = 1}^J\left(\frac{z_{j}}{z_{J + 1}}\right)^{\mu_{ji}} 
= \sum_{i = 1}^I p_i \sum_{j = 1}^J {\mu_{ji}}  \log \left(\frac{z_{j}}{z_{J + 1}}\right) \\
&= \sum_{i = 1}^I \sum_{j = 1}^J {\mu_{ji}}  p_i \log {z_{j}} -   \sum_{i = 1}^I  \sum_{j = 1}^J p_i {\mu_{ji}} \log {z_{J + 1}}\\
&=  \sum_{j = 1}^J  \log {z_{j}}\sum_{i = 1}^I{\mu_{ji}}  p_i -  \log {z_{J + 1}} \sum_{j = 1}^J \sum_{i = 1}^I  p_i {\mu_{ji}} \\
&=  \sum_{j = 1}^J z_{j}  \log {z_{j}}-  \log {z_{J + 1}} \sum_{j = 1}^J z_{j}  \\
&=  \sum_{j = 1}^J z_{j}  \log {z_{j}}- {z_{J + 1}} \log {z_{J + 1}}.
\end{align*}
Denote the final expression by 
\begin{equation}\label{Qproof2}
W_z = \sum_{j = 1}^J z_j \log z_j - z_{J+1} \log z_{J + 1}.
\end{equation}

Therefore, for a $\boldsymbol p \in \mathcal{M}$,
\begin{equation}\label{qbothproof}
Q_p(\boldsymbol \theta) = W_z(\boldsymbol \theta).%
\end{equation}
Next, both sides of the equality (\ref{qbothproof}) will be differentiated twice with respect to $\boldsymbol \theta$,  the resulting second derivatives will be set equal to each other, and then the latter will allow to determine the expression for $\frac{\partial \boldsymbol p\tr}{\partial \boldsymbol{\theta}} diag(\boldsymbol p^{-1 }) \frac{\partial \boldsymbol p}{\partial \boldsymbol{\theta}}$.

First, differentiate $Q_p$ with respect to $\boldsymbol \theta$:

\begin{align*}
\frac{\partial Q_p}{\partial \boldsymbol{\theta}} &= \sum_{i = 1}^{I} \frac{\partial p_i}{\partial \boldsymbol{\theta}}  \log \hat{p}_i + \sum_{i = 1}^{I} p_i \frac{\partial p_i/ \partial \boldsymbol{\theta}}{  \hat{p}_i}
= \sum_{i = 1}^{I} \frac{\partial p_i}{\partial \boldsymbol{\theta}}  \log \hat{p}_i,
\end{align*}
because if $\boldsymbol p \in \mathcal{M}$,   $\boldsymbol p = \hat{\boldsymbol{p}}$ and $\sum_{i = 1}^{I}  {\partial p_i/ \partial \boldsymbol{\theta}} = 0$.

The second derivative of $Q_p$ with respect to $\boldsymbol \theta$ is, therefore,
\begin{align*}
\frac{\partial^2 Q_p}{\partial \boldsymbol{\theta}^2} &= \sum_{i = 1}^{I} \frac{\partial^2 p_i}{\partial \boldsymbol{\theta}^2}  \log \hat{p}_i  + \sum_{i = 1}^{I}  \frac{\partial p_i}{\partial \boldsymbol{\theta}} \frac{\partial p_i/ \partial \boldsymbol{\theta}}{  \hat{p}_i},
\end{align*}
or, written in the matrix form,
\begin{align*}
\frac{\partial^2 Q_p}{\partial \boldsymbol{\theta}^2} &=  \frac{\partial^2 \boldsymbol p\tr}{\partial \boldsymbol{\theta}^2}  \log \hat{\boldsymbol p}  + \frac{\partial \boldsymbol p\tr}{\partial \boldsymbol{\theta}} diag(\hat{\boldsymbol p}^{-1 }) \frac{\partial \boldsymbol p}{\partial \boldsymbol{\theta}},
\end{align*}
and, finally, after recalling that $\boldsymbol p = \hat{\boldsymbol p}$:
\begin{align*}
\frac{\partial^2 Q_p}{\partial \boldsymbol{\theta}^2} &= \frac{\partial^2 \boldsymbol p\tr}{\partial \boldsymbol{\theta}^2}  \log \boldsymbol p  + \frac{\partial \boldsymbol p\tr}{\partial \boldsymbol{\theta}} diag(\boldsymbol p^{-1 }) \frac{\partial \boldsymbol p}{\partial \boldsymbol{\theta}}.
\end{align*}

Next, differentiate $W_z$ with respect to $\theta_k$, for $k = 1, \dots, J-1$:
\begin{align*}
\frac{\partial W_z}{\partial {\theta_k}} &= \sum_{j = 1}^{J} \frac{\partial z_j}{\partial {\theta_k}}  \log z_j + \sum_{j = 1}^{J} z_j \frac{\partial z_j/ \partial {\theta_k}}{  z_j} - \frac{\partial z_{J+1}}{\partial {\theta_k}}  \log z_{J+1} - z_{J+1}\frac{\partial z_{J+1}/ \partial {\theta_k}}{  z_{J+1}}
\end{align*}
Because $z_{J + 1} = \sum_{j = 1}^{J} z_{j}$ and thus, $\sum_{j = 1}^{J}  {\partial z_j/ \partial \boldsymbol{\theta}} - \partial z_{J+1}/ \partial \boldsymbol{\theta}= 0$, the expression for the derivative simplifies to:
\begin{align*}
\frac{\partial W_z}{\partial {\theta_k}} &= \sum_{j = 1}^{J} \frac{\partial z_j}{\partial {\theta_k}}  \log z_j  - \frac{\partial z_{J+1}}{\partial {\theta_k}}  \log z_{J+1}.
\end{align*}

Compute the second derivatives of $W_z$ and rearrange the terms:
\begin{align*}
\frac{\partial^2 W_z}{\partial {\theta_k \theta_m}} &= \sum_{j = 1}^{J} \frac{\partial^2 z_j}{\partial {\theta_k \theta_m}}  \log z_j  + \sum_{j = 1}^{J}  \frac{\partial z_j}{\partial{\theta_k}} \frac{\partial z_j/ \partial {\theta_m}}{ z_j} -  \frac{\partial^2 z_{J+1}}{\partial {\theta_k \theta_m}}  \log z_{J +1} - \frac{\partial z_{J +1}}{\partial {\theta_k}} \frac{\partial z_{J +1}/ \partial {\theta_m}}{ z_{J +1}}  \\
&=\sum_{j = 1}^{J} \frac{\partial^2 z_j}{\partial {\theta_k \theta_m}}  \log z_j   -  \frac{\partial^2 z_{J+1}}{\partial {\theta_k \theta_m}}  \log z_{J +1} + \sum_{j = 1}^{J}  \frac{\partial z_j}{\partial {\theta_k}} \frac{\partial z_j/ \partial {\theta_m}}{ z_j} - \frac{\partial z_{J +1}}{\partial {\theta_k}} \frac{\partial z_{J +1}/ \partial {\theta_m}}{ z_{J +1}},  \\
&=\sum_{j = 1}^{J} \frac{\partial^2 z_j}{\partial {\theta_k}^2}  \log (z_j/z_{J +1})  + \sum_{j = 1}^{J}  \frac{\partial z_j}{\partial {\theta_k}} \frac{\partial z_j/ \partial {\theta_k}}{ z_j} - \frac{\partial z_{J +1}}{\partial {\theta_k}} \frac{\partial z_{J +1}/ \partial {\theta_k}}{ z_{J +1}}, \\
\end{align*}
for $ m = 1,\dots, J-1$. 
To simplify further, recall that $z_j = \theta_j z_{J+1}$, for $j = 1, \dots, J$ and $\theta_J = 1 - \sum_{j = 1}^{J-1} \theta_j$, and take into account that for $k = 1, \dots, J -1$, $$\partial z_{J +1}/ \partial {\theta_k} = \sum_{j = 1}^J \partial z_{j}/ \partial {\theta_k} = \partial z_{k}/ \partial {\theta_k} + \partial z_{J}/ \partial {\theta_k} = z_{J+1} - z_{J+1} = 0.$$
 
\noindent For $k = m$, one has:
\begin{align*}
\frac{\partial^2 W_z}{\partial {\theta_k}^2} &=\sum_{j = 1}^{J} \frac{\partial^2 z_j}{\partial {\theta_k}^2}  \log (z_j/z_{J +1})  + \sum_{j = 1}^{J}  \frac{\partial z_j}{\partial {\theta_k}} \frac{\partial z_j/ \partial {\theta_k}}{ z_j} - \frac{\partial z_{J +1}}{\partial {\theta_k}} \frac{\partial z_{J +1}/ \partial {\theta_k}}{ z_{J +1}}  \\
&=\sum_{j = 1}^{J} \frac{\partial^2 z_j}{\partial {\theta_k}^2}  \log (z_j/z_{J +1}) + \frac{\partial z_k}{\partial {\theta_k}} \frac{\partial z_k/ \partial {\theta_k}}{ z_k}  + \frac{\partial z_{J}}{\partial {\theta_k}} \frac{\partial z_J/ \partial {\theta_k}}{ z_J} - \frac{\partial z_{J +1}}{\partial {\theta_k}} \frac{\partial z_{J +1}/ \partial {\theta_k}}{ z_{J +1}}  \\
&=\sum_{j = 1}^{J} \frac{\partial^2 z_j}{\partial {\theta_k}^2} \log (z_j/z_{J +1})  + \frac{ z_{J +1}^2}{ z_{J+1}\theta_k}  + (-z_{J +1}) \frac{-z_{J +1}}{ z_{J +1}\theta_J} - \frac{\partial z_{J +1}}{\partial {\theta_k}} \frac{\partial z_{J +1}/ \partial {\theta_k}}{ z_{J +1}}  \\
&=\sum_{j = 1}^{J} \frac{\partial^2 z_j}{\partial {\theta_k}^2}  \log (z_j/z_{J +1})  + z_{J +1}(\frac{1}{\theta_k}  + \frac{1}{\theta_J}).
\end{align*}
\noindent For $k \neq m$:
\begin{align*}
\frac{\partial^2 W_z}{\partial {\theta_k\theta_m}} &=\sum_{j = 1}^{J} \frac{\partial^2 z_j}{\partial {\theta_k \theta_m}} \log (z_j/z_{J +1}) + \sum_{j = 1}^{J}  \frac{\partial z_j}{\partial {\theta_k}} \frac{\partial z_j/ \partial {\theta_m}}{ z_j} - \frac{\partial z_{J +1}}{\partial {\theta_k}} \frac{\partial z_{J +1}/ \partial {\theta_m}}{ z_{J +1}}  \\
&=\sum_{j = 1}^{J} \frac{\partial^2 z_j}{\partial {\theta_k \theta_m}}  \log (z_j/z_{J +1})  + \frac{\partial z_J}{\partial {\theta_k}} \frac{\partial z_J/ \partial {\theta_m}}{ z_j}  - \frac{\partial z_{J +1}}{\partial {\theta_k}} \frac{\partial z_{J +1}/ \partial {\theta_m}}{ z_{J +1}}  \\
&=\sum_{j = 1}^{J} \frac{\partial^2 z_j}{\partial {\theta_k \theta_m}}  \log (z_j/z_{J +1})  + (-z_{J +1}) \frac{-z_{J +1}}{ z_{J +1}\theta_J}  - \frac{\partial z_{J +1}}{\partial {\theta_k}} \frac{\partial z_{J +1}/ \partial {\theta_m}}{ z_{J +1}}  \\
&=\sum_{j = 1}^{J} \frac{\partial^2 z_j}{\partial {\theta_k\theta_m}}  \log (z_j/z_{J +1})  + z_{J +1}(\frac{1}{\theta_J}).
\end{align*}
Finally,
\begin{equation}\label{secondDerivativesOneFloret}
\frac{\partial^2 W_z}{\partial {\theta_k\theta_m}} = \left\{ \begin{array}{l} 
\sum_{j = 1}^{J} \frac{\partial^2 z_j}{\partial {\theta_k}^2}  \log (z_j/z_{J +1})   + z_{J +1}(\frac{1}{\theta_k}  + \frac{1}{\theta_J}), \, \mbox{for }  k = m,\\
{}\\
\sum_{j = 1}^{J} \frac{\partial^2 z_j}{\partial {\theta_k\theta_m}}  \log (z_j/z_{J +1}) + z_{J +1}(\frac{1}{\theta_J}), \, \mbox{for }  k    \neq m.  \\
                                                                                      \end{array}  \right.
\end{equation}

Equivalently, in the matrix form,
\begin{equation*}
\frac{\partial^2 W_z}{\partial \tilde{\boldsymbol\theta}^2} =  \frac{\partial^2 {\boldsymbol z}}{\partial {\tilde{\boldsymbol \theta}}^2} \log ({\boldsymbol z}/z_{J+1}) + z_{J +1}(diag(\tilde{\boldsymbol\theta}^{-1} ) + \boldsymbol 1 \cdot \boldsymbol 1\tr \theta_{J}^{-1}).
\end{equation*}

To complete the proof, we need to show that 
$$ \frac{\partial^2 {\boldsymbol z}}{\partial {\tilde{\boldsymbol \theta}^2}} \log ({\boldsymbol z}/z_{J+1}) = 
\frac{\partial^2 \boldsymbol p\tr}{\partial \boldsymbol{\theta}^2}  \log \hat{\boldsymbol p}.$$

Because $z_j = \sum_{i = 1}^I \mu_{ij} p_i$,  $$ \frac{\partial^2 z_j }{\partial \theta_k \theta_m }  = 
\sum_{i = 1}^I \mu_{ij} \frac{\partial^2 p_i }{\partial \theta_k \theta_m } $$

Recall that $\hat{p}_{i} = \prod_{j = 1}^J \theta_j^{\mu_{ji}} =  \prod_{j = 1}^J (z_j/z_{J+1})^{\mu_{ji}} $. Hence, 
\begin{align*}
\sum_{i = 1}^I \frac{\partial^2 p_i}{\partial \theta_k \theta_m}\log \hat{p}_{i} &= \sum_{i = 1}^I \frac{\partial^2 p_i}{\partial \theta_k \theta_m }\sum_{j = 1}^J{\mu_{ji}} \log (z_j/z_{J+1}) = 
\sum_{j = 1}^J  \log (z_j/z_{J+1}) \sum_{i = 1}^I{\mu_{ji}} \frac{\partial^2 p_i}{\partial \theta_k \theta_m}\\
& = \sum_{j = 1}^{J} \log (z_j/z_{J+1}) \frac{\partial^2 z_j}{\partial \theta_k \theta_m}.
\end{align*}

Therefore,
$$ \frac{\partial \boldsymbol p\tr}{\partial \boldsymbol{\theta}} diag(\boldsymbol p^{-1 }) \frac{\partial \boldsymbol p}{\partial \boldsymbol{\theta}} = z_{J + 1}(diag(\tilde{\boldsymbol\theta}^{-1} ) + \boldsymbol 1 \cdot \boldsymbol 1\tr \theta_{J}^{-1}) = {\mathcal{S}(p(\boldsymbol{\theta}))}\cdot(diag(\tilde{\boldsymbol\theta}^{-1} ) + \boldsymbol 1 \cdot \boldsymbol 1\tr \theta_{J}^{-1}),$$
which completes the proof.
\end{proof}

\subsection{Proof of Theorem \ref{theoremMatrixProbMultiFloret}} 

\begin{proof}

The procedure generalizes  the proof of Theorem \ref{theoremMatrixProbOneFloret}.
The subscript $0$ in $\boldsymbol p_0$ and $\boldsymbol \theta_0$ is also omitted. The derivatives are with respect to $\tilde{\boldsymbol \theta}$, although the notation $\boldsymbol \theta$ is used instead.

Let $\mathbf{A} = diag(\boldsymbol p)^{-1/2 } \frac{\partial \boldsymbol p}{\partial \boldsymbol{\theta}}$. It will be shown first that the matrix 
$$\mathbf{A}\tr \mathbf{A} = \frac{\partial \boldsymbol p\tr}{\partial \boldsymbol{\theta}} diag(\boldsymbol p^{-1 }) \frac{\partial \boldsymbol p}{\partial \boldsymbol{\theta}}$$
is block-diagonal with floret-specific blocks equal to:
\begin{equation}\label{MultiFloretMxA}
\mathbf{B}_f = \mathcal{S}_f(\boldsymbol p(\boldsymbol{\theta}))\cdot( diag(\tilde{\boldsymbol \theta_f}^{-1})  + \boldsymbol 1 \cdot \boldsymbol 1\tr  \theta_{J}^{-1}), \mbox{ for each } f \in \mathcal{F}.
\end{equation}

Let $\boldsymbol p$ denote the observed probability distribution, and  $\hat{\boldsymbol p} = (\hat{p}_{i})_{i = 1}^I$ the MLE having observed $\boldsymbol p$ under $\mathcal{M}$. 
Because $\mathcal{M}$ is a multi-floret tree model,  for each $i = 1, \dots, I$, 
\begin{equation}\label{MLEproofMulti}
\hat{p}_i  =   \prod_{f = 1 } ^F\prod_{j = 1 } ^{J_f} \left(\frac{\sum_{i = 1}^I \mu_{fji} p_i}{\sum_{l = 1 }^{J_f} \sum_{i = 1}^I \mu_{fli} p_i}\right)^{ \mu_{fji}}. \tag{**}
\end{equation}
Here $(\mu_{fji})_{f,j,i}$ are the corresponding entries of the model matrix $\mathbf{M}$. For each $f \in \mathcal{F}$, letting 
$z_{fj} = \sum_{i = 1}^I \mu_{fji} p_i$, for each $j = 1, \dots, J_f$, $\boldsymbol z_f = (z_{f1}, \dots, z_{f,J_f})\tr $, $z_{f,J_f + 1} = z_{f1} + \dots + z_{f,J_f}$, and $\theta_{fj} = z_{fj} / z_{f,J_f + 1}$, for $j = 1, \dots, J_f$,  one obtains that 
\begin{equation}\label{thetaproofMulti}
\hat{p}_{i} =\prod_{f = 1 } ^F \prod_{j = 1}^{J_f} \left(\frac{z_{fj}}{z_{f,J_f + 1}}\right)^{\mu_{fji}} = \prod_{f = 1 } ^F\prod_{j = 1}^{J_f} \theta_{fj}^{\mu_{fji}},  \quad \mbox{for } i = 1, \dots, I.
\end{equation}

As in the proof of Theorem \ref{theoremMatrixProbOneFloret}, define
\begin{equation}\label{QproofF}
Q_p = \sum_{i = 1}^I p_i \log \hat{p}_i. 
\end{equation}
Generalizing Proposition 2.5 in \cite{Kapranov1991}  for functions of the form (\ref{MLEproofMulti}), one has:
\begin{equation} \nonumber
\log {\hat{p}}_k = \frac{\partial Q_p(\boldsymbol \theta)}{\partial p_k}, \quad \mbox{for } k = 1, \dots, I.
\end{equation}
Next, let for $f \in \mathcal{F}$, 
\begin{equation}\nonumber
W_{fz} = \sum_{j = 1}^{J_f} z_{fj} \log z_{fj} - z_{f,J_f+1} \log z_{f,J_f + 1}, \quad 
\mbox{and} \quad W_z = \sum_{f = 1}^F W_z.
\end{equation}
After substituting into $Q_p$ the expression for $\hat{p}_i$ from (\ref{thetaproofMulti}) and rearranging terms one obtains that for a $\boldsymbol p \in \mathcal{M}$,
\begin{equation}\label{qbothproofMulti}
Q_p(\boldsymbol \theta) = W_z(\boldsymbol \theta) = \sum_{f = 1}^F W_{fz}.
\end{equation}

Following the idea that was used in the proof of Theorem  \ref{theoremMatrixProbOneFloret}, the second derivatives with respect to $\boldsymbol \theta$ of the right hand and left hand sides of the equality (\ref{qbothproofMulti}) will be set equal to each other, which would allow to determine the expression for $\frac{\partial \boldsymbol p\tr}{\partial \boldsymbol{\theta}} diag(\boldsymbol p^{-1 }) \frac{\partial \boldsymbol p}{\partial \boldsymbol{\theta}}$.

One can verify that the first and second derivatives of $Q_p$ with respect to $\boldsymbol \theta$ have the same form as in the one-floret case, namely,
\begin{align*}
\frac{\partial Q_p}{\partial \boldsymbol{\theta}} &= \sum_{i = 1}^{I} \frac{\partial p_i}{\partial \boldsymbol{\theta}}  \log \hat{p}_i,\\
\frac{\partial^2 Q_p}{\partial \boldsymbol{\theta}^2} &= \sum_{i = 1}^{I} \frac{\partial^2 p_i}{\partial \boldsymbol{\theta}^2}  \log \hat{p}_i  + \sum_{i = 1}^{I}  \frac{\partial p_i}{\partial \boldsymbol{\theta}} \frac{\partial p_i/ \partial \boldsymbol{\theta}}{  \hat{p}_i}.
\end{align*}
After recalling that if $\boldsymbol p \in \mathcal{M}$ then $\boldsymbol p = \hat{\boldsymbol p}$, one has in the matrix form:
\begin{align*}
\frac{\partial^2 Q_p}{\partial \boldsymbol{\theta}^2} &= \frac{\partial^2 \boldsymbol p\tr}{\partial \boldsymbol{\theta}^2}  \log \boldsymbol p  + \frac{\partial \boldsymbol p\tr}{\partial \boldsymbol{\theta}} diag(\boldsymbol p^{-1 }) \frac{\partial \boldsymbol p}{\partial \boldsymbol{\theta}}.
\end{align*}

Next, notice that for $f, g \in \mathcal{F}$,  for $k = 1, \dots, J_g-1$, one has:
\begin{align*}
\frac{\partial W_{fz}}{\partial {\theta_{gk}}} &= \left \{\begin{array}{ll}
 \sum_{j = 1}^{J_f} \frac{\partial z_{fj}}{\partial {\theta_{fk}}}  \log z_{fj}  - \frac{\partial z_{f,J_f+1}}{\partial {\theta_{fk}}}  \log z_{J_f+1}, & \mbox{ if } \,\, f = g, \\
 0 & \mbox{ if } \,\, f \neq g.\\
 \end{array} \right.
\end{align*}
Therefore,
\begin{align*}
\frac{\partial W_{z}}{\partial {\theta_{fk}}} &= \sum_{g = 1}^F \frac{\partial W_{gz}}{\partial {\theta_{fk}}}  =   \frac{\partial W_{fz}}{\partial {\theta_{fk}}}  = \sum_{j = 1}^{J_f} \frac{\partial z_{fj}}{\partial {\theta_{fk}}}  \log z_{fj}  - \frac{\partial z_{f,J_f+1}}{\partial {\theta_{fk}}}  \log z_{J_f+1}.
\end{align*}

Further, because for the second derivatives,
\begin{align*}
\frac{\partial^2 W_{z}}{\partial {\theta_{fk}}\partial {\theta_{gm}}} &= \frac{\partial^2 W_{fz}}{\partial {\theta_{fk}}\partial {\theta_{fm}}},
\end{align*}
where $k, m = 1, \dots, J_f - 1$, one has:
\begin{equation*}
\frac{\partial^2 W_{z}}{\partial \tilde{\boldsymbol\theta}_f \partial \tilde{\boldsymbol\theta}_g} =
 \left \{\begin{array}{ll}
 \frac{\partial^2 W_{fz}}{\partial \tilde{\boldsymbol\theta}_f^2}& \mbox{ if } \,\, f = g, \\
 \mathbf{0} & \mbox{ if } \,\, f \neq g.\\
 \end{array} \right.
\end{equation*}
As can be verified using the derivation used in the proof of Theorem  \ref{theoremMatrixProbOneFloret},
\begin{align*}
\frac{\partial^2 W_{fz}}{\partial \tilde{\boldsymbol\theta}_f^2} &=  \frac{\partial^2 {\boldsymbol z}_f}{\partial {\tilde{\boldsymbol \theta}}_f^2} \log ({\boldsymbol z_f}/z_{f,J_f+1}) + z_{f,J_f +1}(diag(\tilde{\boldsymbol\theta_f}^{-1} ) + \boldsymbol 1 \cdot \boldsymbol 1\tr \theta_{f,J_f}^{-1}) \\ 
&= \frac{\partial^2 \boldsymbol p\tr}{\partial \boldsymbol{\theta}^2}  \log \hat{\boldsymbol p} + z_{f,J_f +1}(diag(\tilde{\boldsymbol\theta_f}^{-1} ) + \boldsymbol 1 \cdot \boldsymbol 1\tr \theta_{f,J_f}^{-1}).
\end{align*}

Therefore,
\begin{align*}
\frac{\partial \boldsymbol p\tr}{\partial \boldsymbol{\theta}} diag(\boldsymbol p^{-1 }) \frac{\partial \boldsymbol p}{\partial \boldsymbol{\theta}} &= diag\left\{z_{f,J_f + 1}(diag(\tilde{\boldsymbol\theta}_f^{-1} ) + \boldsymbol 1 \cdot \boldsymbol 1\tr \theta_{f,J_f}^{-1})\right \}_{f = 1}^F \\
&= diag\left\{ {\mathcal{S}_f(p(\boldsymbol{\theta}))}\cdot(diag(\tilde{\boldsymbol\theta}_f^{-1} ) + \boldsymbol 1 \cdot \boldsymbol 1\tr \theta_{f,J_f}^{-1})\right \}_{f = 1}^F,
\end{align*}

which completes the proof.

\end{proof}

\bibliography{Trees20250702.bib}
\bibliographystyle{apacite}

\end{document}